\newtheorem{theorem}{Theorem}
\newtheorem{corollary}{Corollary}
\newtheorem{proposition}{Proposition}
\newtheorem{lemma}{Lemma}
\newcommand{\cupdot}{\mathbin{\mathaccent\cdot\cup}}
\newcommand{\G}{\mathcal{G}_{2}(S)}
\newcommand{\ch}{{\rm ch}}
\newcommand{\face}{{\cal F}}
\date{}
\begin{document}

\title{Geometric Biplane Graphs I: Maximal Graphs
\thanks{A preliminary version of this paper has been presented at the
\emph{Mexican Conference on Discrete Mathematics and Computational Geometry}, Oaxaca, M\'{e}xico, November 2013. 
}}

\author{
Alfredo Garc\'{i}a$^1$\and
Ferran Hurtado$^2$\and
Matias Korman$^{3,4}$\and
In\^{e}s Matos$^5$\and
Maria Saumell$^6$\and
Rodrigo I. Silveira$^{5,2}$\and
Javier Tejel$^1$\and
Csaba D. T\'{o}th$^7$
}

\maketitle

\footnotetext[1]{Departamento de M\'{e}todos Estad\'{\i}sticos, IUMA, Universidad de Zaragoza, Zaragoza, Spain.\\
\url{olaverri@unizar.es, jtejel@unizar.es}.}
\footnotetext[2]{Departament de Matem\`atica Aplicada II, Universitat Polit\`{e}cnica de Catalunya, Barcelona, Spain.\\
\url{ferran.hurtado@upc.edu, rodrigo.silveira@upc.edu}.}
\footnotetext[3]{National Institute of Informatics (NII), Tokyo, Japan. \url{korman@nii.ac.jp}}
\footnotetext[4]{Kawarabayashi Large Graph Project, ERATO, Japan Science and Technology Agency (JST).}
\footnotetext[5]{Departamento de Matem\'atica \& CIDMA, Universidade de Aveiro, Aveiro, Portugal,  \url{ipmatos@ua.pt, rodrigo.silveira@ua.pt}.}
\footnotetext[6]{Department of Mathematics and European Centre of Excellence NTIS (New Technologies for the Information Society), University of West Bohemia, Pilsen, Czech Republic. \url{saumell@kma.zcu.cz}.}
\footnotetext[7]{Department of Mathematics, California State University Northridge, Los Angeles, USA. \url{cdtoth@acm.org}.}

\begin{abstract}
We study biplane graphs drawn on a finite planar point set $S$ in general position. This is the family of geometric graphs whose vertex set is $S$ and can be decomposed into two plane graphs. We show that two maximal biplane graphs---in the sense that no edge can be added while staying biplane---may differ in the number of edges, and we provide an efficient algorithm for adding edges to a biplane graph to make it maximal. We also study extremal properties of maximal biplane graphs such as the maximum number of edges and the largest maximum connectivity over $n$-element point sets.
\end{abstract}


\section{Introduction}
In a \emph{geometric graph} $G=(V,E)$, the vertices are distinct points in the plane in general position, and the edges are straight line segments between pairs of vertices.
A \emph{plane graph} is a geometric graph in which no two edges cross. Every (abstract) graph has a realization as a geometric graph (by simply mapping the vertices into distinct points in the plane, no three of which are collinear), and every planar graph can be realized as a plane graph by F\'ary's theorem ~\cite{F48}. The number of $n$-vertex labeled planar graphs is at least $27.22^n \cdot n!$~\cite{GN09}. However, there are only $2^{O(n)}$ plane graphs on any given set of $n$ points in the plane~\cite{ACN+82,SS11}.

We consider a generalization of plane graphs. A geometric graph $G=(V,E)$ is \emph{$k$-plane} for some $k\in \mathbb{N}$ if it admits a partition of its edges $E=E_1\cupdot\ldots \cupdot E_k$ such that $G_1=(V,E_1), \ldots , G_k=(V,E_k)$ are each plane graphs. Let $S$ be a planar point set in general position, that is, no three points in $S$ are collinear. Denote by $\mathcal{G}_k(S)$ the family of $k$-plane graphs with vertex set $S$. With this terminology, $\mathcal{G}_1(S)$ is the family of plane graphs with vertex set $S$, and $\G$ is the family of 2-plane graphs (also known as \emph{biplane graphs}) with vertex set $S$.

In this and a companion paper~\cite{GHKMSSTT13-II}, we study $\G$ and contrast combinatorial properties between plane graphs $\mathcal{G}_1(S)$ and biplane graphs $\G$. Plane graphs have limitations in achieving some desirable properties, such as high connectivity, as it is known that every plane graph $H$ has a vertex with degree at most $5$, hence $\kappa(H)\le\lambda(H)\le\delta(H)\le5$ (we use standard graph theory notation as in~\cite{ChartrandLesniak}). It is natural to expect that significantly better values can be obtained if the larger family $\G$ is used. This is precisely the topic we explore in these papers, mostly focusing on graph size and vertex connectivity.

One reason for the study of the family $\G$, instead of the most general family $\mathcal{G}_k(S)$, is that testing when a geometric graph is $k$-plane can be done in $O(n\log n)$ time for $k=2$, but it is NP-Complete for any $k\ge 3$~\cite{Epp09}. On the other hand, to imagine biplane graphs we can suppose that the plane has two sides, with the vertices of a graph being on both sides but each edge on only one side. In this way, biplane graphs can model some physical networks, as, for example, printed circuit boards (PCB). (A PCB consists of several electrical components embedded into a board, connected by noncrossing tracks, which can be printed on either side of the board.)

\paragraph{Related concepts.}
Note that the above generalization of plane graphs is reminiscent to, although more restrictive than, the notion of thickness, geometric thickness, and book thickness, which are defined for abstract graphs~\cite{Bei97,Dillencourt}. We recall their definitions for ease of comparison. The \emph{thickness} of an (abstract) graph $G=(V,E)$ is the smallest $k\in \mathbb{N}$ such that $G$ admits an edge partition $E=E_1\cupdot\ldots \cupdot E_k$ with the property that $G_1=(V,E_1), \ldots , G_k=(V,E_k)$ are each planar graphs. The \emph{geometric thickness} of an (abstract) graph $G=(V,E)$ is the smallest $k\in \mathbb{N}$ such that $G$ admits an edge partition $E=E_1\cupdot\ldots \cupdot E_k$ satisfying that $G_1=(V,E_1), \ldots , G_k=(V,E_k)$ can be simultaneously embedded as plane graphs where the vertex set is mapped to a  common labeled point set. The \emph{book thickness} is a restricted version of the geometric thickness where $G_1,\ldots, G_k$ are simultaneously embedded on a point set in convex position.

Notice that every $k$-plane graph, if interpreted as an abstract graph, has geometric thickness at most $k$, but in addition we are given a specific embedding in the plane in which the decomposition into $k$ plane layers is possible. In other words, the term $k$-plane graph refers to a geometric object, a drawing, while having geometric thickness $k$ is a property of the underlying abstract graph. For example, the cycle $C_4$ has geometric thickness $1$, but a drawing connecting the points $(0,0)$, $(1,1)$, $(1,0)$, $(0,1)$ in this cyclic order with straight-line segments would have a crossing and be $2$-plane.

For disambiguation, we also mention two additional notions, which are commonly used in the graph drawing community, but have little to do with our subject. An (abstract) graph is called \emph{$k$-planar} if it has a drawing in the plane (where the edges are Jordan arcs) such that each edge crosses at most $k$ other edges. It is already NP-hard to recognize 1-planar graphs~\cite{KM08}. The other notion worth mentioning is \emph{1-plane}, which is used for a specific geometric drawing of a 1-planar graph~\cite{EHLP12} in which edges are crossed at most once. Note that these two notions have a different meaning from the definition of $k$-plane graphs introduced above.

\paragraph{Prior work and organization of the paper.}
Our main focus is the study of the largest possible graphs for a fixed point set. This involves the concepts of {\em maximum} (graphs with the largest possible number of edges) and \emph{maximal} graphs (i.e., graphs in which the addition of any edge would break the biplane property). In Section~\ref{sec:mini} we formally define both concepts, and study several fundamental properties. Among other results, we show that two maximal biplane graphs on the same point set do not necessarily have the same number of edges. In particular, this implies that the maximum and maximal properties of biplane graphs are not equivalent (as opposed to the case of planar graphs).

Algorithmic issues are studied in Section~\ref{sec_augment}. First, we present an algorithm for determining whether a given geometric graph is biplane or not. We then show how to augment a biplane graph with new edges to a maximal biplane graph. This result is a variant of the fundamental problem of \emph{graph augmentation}, where one would like to add new edges, ideally as few as possible, to a given graph in such a way that some desired property is achieved. There has been extensive work on augmenting a disconnected plane graph to a connected one (see \cite{HT13} for a recent survey) or on achieving good connectivity properties \cite{AGHTU08,ISTW10,AIR09,RW12,Csa12}.

In Section~\ref{sec:extremal}, we study the maximum vertex connectivity that can be attained for the graphs in $\G$ over all $n$-element point sets $S$ in general position. Similar extremal problems have been considered for graphs of thickness or geometric thickness two \cite{Bei97,HSV99}.

In the companion paper \cite{GHKMSSTT13-II} we consider several problems on augmenting plane graphs to biplane supergraphs with higher connectivity, including the case in which the input is only a point set $S$ and the goal is to construct a \emph{good} biplane graph on $S$. These problems are closely related to the results we present here, and have also received substantial attention for the case of plane graphs. 


\section{Fundamental Properties of Maximal Biplane Graphs}
\label{sec:mini}

A (geometric) graph $G=(V,E)$ is \emph{maximal} (or \emph{edge-maximal}) in a family of graphs $\mathcal{F}$ if there is no graph $G'=(V,E')$ in $\mathcal{F}$ such that $E\subset E'$. We are mostly concerned with problems related to finding a biplane graph of high connectivity  or high vertex degree in $\G$ for a given point set $S$. Since the addition of new edges does not decrease the vertex connectivity, we can restrict our attention to maximal graphs in $\G$.

Recall that a maximal plane graph in $\mathcal{G}_1(S)$ is a \emph{triangulation}, that is, a plane graph where all bounded faces are triangles, and the boundary of the outer face is the convex hull $\ch(S)$. It is well known that any two triangulations on the same point set have the same number of edges. Moreover, any triangulation is also  maximal (i.e., the addition of any edge to a triangulation will result in a nonplane graph). We note that these properties do not hold for biplane graphs: there exist point sets $S$ for which not all maximal graphs in $\G$ have the same number of edges (see an example in Figure~\ref{pic:MaximalGraphs}). In particular, this implies that not every maximal biplane graph is maximum.

\begin{figure}[hp]
\centering
\includegraphics[scale=0.6]{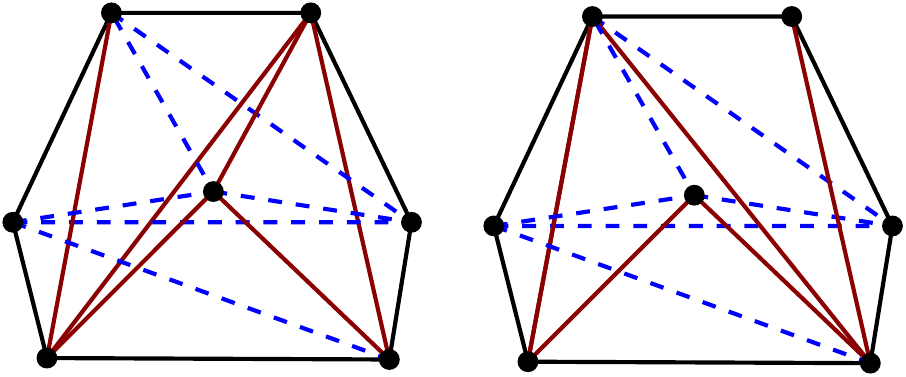}
\caption{Two maximal biplane graphs on the same point set.
The left graph is a maximum biplane graph with 18 edges,
whereas the right one is maximal with 17 edges.}
\label{pic:MaximalGraphs}
\end{figure}

In the following, we show that every maximal biplane graph is the union of two maximal plane graphs.

\begin{lemma}\label{lem:tri}
If $G=(S,E)$ is a maximal biplane graph in $\G$,
then there are two triangulations $T'=(S,E')$ and $T''=(S,E'')$ such
that $E=E'\cup E''$.
\end{lemma}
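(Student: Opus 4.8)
The plan is to begin from an arbitrary biplane decomposition of $G$, enlarge each of the two plane layers to a triangulation, and then use the edge-maximality of $G$ to argue that every edge inserted during these completions must already belong to $E$. First I would invoke the definition of a biplane graph to write $E = E_1 \cupdot E_2$, where $G_1=(S,E_1)$ and $G_2=(S,E_2)$ are plane graphs. Since any plane graph can be completed to a maximal plane graph by repeatedly inserting non-crossing edges, and since a maximal plane graph in $\mathcal{G}_1(S)$ is a triangulation (as recalled above), I would fix triangulations $T'=(S,E')$ and $T''=(S,E'')$ with $E_1 \subseteq E'$ and $E_2 \subseteq E''$.

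The heart of the argument is to show the two reverse inclusions $E' \subseteq E$ and, symmetrically, $E'' \subseteq E$. Take any edge $e \in E'$. If $e \in E_1$ then $e \in E$ trivially, so suppose $e \in E' \setminus E_1$. Because $T'$ is a plane graph, its subset $E_1 \cup \{e\}$ is crossing-free, so $(S, E_1 \cup \{e\})$ is a plane graph. If $e \notin E$, then $(E_1 \cup \{e\}) \cupdot E_2$ is a biplane decomposition of $(S, E \cup \{e\})$, exhibiting a biplane graph that strictly contains $G$ and contradicting the maximality of $G$. Hence $e \in E$, which proves $E' \subseteq E$; the identical reasoning applied to the second layer gives $E'' \subseteq E$.

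Finally I would combine the inclusions. From $E_1 \subseteq E'$ and $E_2 \subseteq E''$ we obtain $E = E_1 \cup E_2 \subseteq E' \cup E''$, while $E' \subseteq E$ and $E'' \subseteq E$ give $E' \cup E'' \subseteq E$. Therefore $E = E' \cup E''$ with $T'$ and $T''$ triangulations, as required.

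The step I expect to be the main obstacle is the maximality argument in the middle paragraph: one must verify that inserting a single triangulating edge $e$ into the layer $G_1$ keeps that layer plane while leaving the other layer $E_2$ untouched, so that the enlarged graph is genuinely biplane and its existence contradicts maximality. This is precisely where the hypothesis that $G$ is edge-maximal (as opposed to merely maximum) is used, and it is also the reason the two resulting triangulations may share edges rather than being edge-disjoint.
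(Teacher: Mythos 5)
Your proposal is correct and follows essentially the same route as the paper: decompose $E=E_1\cupdot E_2$, augment each layer independently to triangulations $T'$ and $T''$, and invoke edge-maximality of $G$ to force $E'\cup E''\subseteq E$. The only cosmetic difference is that you add the triangulating edges one at a time to derive a contradiction per edge, whereas the paper observes at once that $(S,E'\cup E'')$ is biplane and contains $G$, so maximality immediately gives $E=E'\cup E''$.
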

\begin{proof}
By definition, $G=(S,E)$ has an edge partition $E=E_1\cupdot E_2$, where $G_1=(S,E_1)$ and $G_2=(S,E_2)$ are plane graphs. Augment $G_1$ and $G_2$, independently, to maximal plane graphs $T'=(S,E')$ and $T''=(S,E'')$; thus $T'$ and $T''$ are triangulations. By construction, we have $E\subset E'\cup E''$. The geometric graph $(S,E'\cup E'')$ is biplane by definition. Hence $E=E'\cup E''$, otherwise $G$ would not be maximal.
\end{proof}

The two triangulations, $T'=(S,E')$ and $T''=(S,E'')$, share some edges. The edges of the convex hull $\ch(S)$ are always part of both triangulations, but $T'$ and $T''$ may also share some interior edges. There is a simple characterization of shared edges in terms of edge flips. An edge $e$ in a triangulation is \emph{flippable} if
the union of the two adjacent faces (triangles) is a convex quadrilateral.

\begin{lemma}\label{lem:flip}
Let $G=(S,E)$ be a maximal biplane graph in $\G$ such that $E=E'\cup E''$, where
$T'=(S,E')$ and $T''=(S,E'')$ are two triangulations. No edge of $E'\cap E''$ is flippable in neither $T'$ nor $T''$.
Furthermore, every maximal biplane graph with $n\geq 4$ vertices is 3-connected.
\end{lemma}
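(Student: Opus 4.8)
The lemma has two independent claims, so I plan to treat them separately.

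For the first claim, that no edge of $E'\cap E''$ is flippable in either triangulation, my plan is to argue by contradiction. Suppose some edge $e\in E'\cap E''$ is flippable in, say, $T'$. Being flippable means the two triangles incident to $e$ form a convex quadrilateral $Q$, whose diagonal $e$ can be replaced by the other diagonal $e^{*}$ to yield a new triangulation $T'_{\mathrm{flip}}=(S,(E'\setminus\{e\})\cup\{e^{*}\})$. The key point is that $e^{*}$ crosses only $e$ among the edges of $T'$, so $e^{*}$ is compatible with every edge of $T'$ except $e$ itself. I would then try to exhibit a biplane graph strictly larger than $G$, contradicting maximality. The natural candidate is to add $e^{*}$ to $G$: place $e^{*}$ in the layer $E_1$ if $e$ was routed there, after flipping. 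The thing to verify is that $e^{*}$ does not cross any edge it shares a layer with. Since $e\in E'\cap E''$, the edge $e$ sits in one layer, and $Q$'s interior is empty of vertices, so the only potential conflict for $e^{*}$ is with $e$; if we keep $e$ in its layer and add $e^{*}$ to the \emph{other} layer (the one containing a triangulation that does not use $e^{*}$'s crossing partner), we obtain $E\cup\{e^{*}\}$ as a biplane graph, contradicting maximality. The delicate bookkeeping here is making the layering argument precise: one must check that $e^{*}$ is not already present and that it can indeed be added to one of the two plane layers without creating a crossing, which is where I expect the main obstacle to lie.

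For the second claim, that every maximal biplane graph on $n\geq 4$ vertices is 3-connected, I would use Lemma~\ref{lem:tri} to write $E=E'\cup E''$ as the union of two triangulations. The plan is to show that the union of two triangulations on the same point set is 3-connected whenever $n\geq 4$. A single triangulation is already connected, and in fact 3-connected when $n\geq 4$ by a classical fact (every triangulation of a point set with at least four vertices is 3-connected, since it is a maximal planar graph and maximal planar graphs on at least four vertices are 3-connected by a standard result, e.g.\ via Steinitz-type arguments). Since $E'\subseteq E$, the graph $G$ contains a 3-connected spanning subgraph $T'$, and adding the extra edges of $E''$ cannot decrease connectivity; hence $G$ itself is 3-connected.

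The cleanest route for the second part is therefore to invoke the 3-connectivity of triangulations directly: a triangulation $T'=(S,E')$ with $n\geq 4$ vertices is a maximal planar graph, and maximal planar graphs on at least four vertices are well known to be 3-connected. Because $G\supseteq T'$ shares the same vertex set, every vertex cut of $G$ is also a vertex cut of $T'$, so $G$ inherits 3-connectivity from $T'$. The main work, if one wants a self-contained argument, is justifying the 3-connectivity of a single triangulation; I expect this to be routine given standard planar graph theory, so the genuinely novel content of the lemma is concentrated in the non-flippability claim rather than in the connectivity statement.
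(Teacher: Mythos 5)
Your treatment of the first claim is correct and is essentially the paper's own proof: flip $e$ in $T'$ to obtain the triangulation $T'''=(S,(E'\setminus\{e\})\cup\{e^{*}\})$; since $e\in E''$ survives in the other layer, the graph $(S,E''\cup E''')$ is biplane and contains $E\cup\{e^{*}\}$, contradicting maximality. The ``delicate bookkeeping'' you worry about amounts exactly to this observation and is unproblematic.

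The second claim is where you have a genuine gap. Your argument rests on the assertion that a triangulation of a point set on $n\geq 4$ vertices is a maximal planar graph and hence 3-connected. This is false whenever the convex hull has more than three vertices: a triangulation of $S$ has $3n-h-3$ edges, where $h$ is the number of hull vertices, which is strictly fewer than $3n-6$ when $h\geq 4$; it is maximal only as a \emph{plane geometric graph on $S$}, not as an abstract planar graph. Concretely, take $n\geq 4$ points in convex position: any triangulation contains a diagonal $uv$, and since no edge of a plane graph on $S$ may cross $uv$, deleting $u$ and $v$ disconnects the two sides of that diagonal, so the triangulation has a 2-cut. Hence no single triangulation $T'\subseteq G$ can serve as a 3-connected spanning subgraph, and the union of two triangulations need not be 3-connected either (e.g., two triangulations of a convex polygon sharing a separating diagonal), so the statement genuinely requires maximality. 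The paper instead derives 3-connectivity from the \emph{first} part of the lemma: if $\{u,v\}$ were a 2-cut of $G$, it would be a 2-cut of both $T'$ and $T''$; in a triangulation a 2-cut forces $uv$ to be an edge (a separating chord), so $uv\in E'\cap E''$; and a separating chord is always flippable, contradicting the non-flippability of shared edges. So the connectivity claim is not routine planar graph theory that can be outsourced to a ``classical fact''---it is a consequence of the claim you proved in the first part, and your proposal is missing precisely this link.
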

\begin{proof}
Suppose, to the contrary, that $e\in E'\cap E''$ is flippable in $T'$ (the case that $e$ is flippable in $T''$ is analogous). We can modify $T'$ by flipping edge $e$. Specifically, let $f$ be the other diagonal of the
convex quadrilateral formed by the two faces of $T'$ adjacent to $e$, and define a new triangulation $T'''=(S,E''')$ with $E'''=(E'\setminus \{e\})\cup \{f\}$. It is clear that $(S,E''\cup E''')$ is biplane, and it contains edge $f$ and all edges in $E$ (including $e\in E''$). Hence it is a biplane graph strictly larger than $G$, contradicting the maximality of $G$. Three-connectivity follows from the fact that a separating chord in a triangulation is always flippable.
\end{proof}

We now study the smallest and largest number of edges that a maximal biplane graph may have.
It is known that every triangulation in $\mathcal{G}_1(S)$ has $3n-h-3$ edges, where $n=|S|\geq 3$ and $h\geq 3$ is the number of vertices of the convex hull $\ch(S)$. By Lemma~\ref{lem:tri}, a maximal biplane graph $G\in \G$ is the union of two triangulations, $T'$ and $T''$, that share the convex hull edges. Thus, it follows that $G$ has at most $6n-3h-6\leq 6n-15$ edges. Hutchinson et al.~\cite{HSV99} improved this bound by showing that, for $n\geq 8$,  every biplane graph in $\G$ has at most $6n-18$ edges. In particular, when $h=3$ the triangulations $T'$ and $T''$ will share at least 3 interior edges. In the remainder of this section we establish lower bounds in terms of $n$ and $h$ for the number of edges in a maximal graph and in
a maximum graph in $\G$.

\begin{theorem}\label{theo_minedges}
Let $S$ be a set of $n\geq 3$ points in the plane such that $\ch(S)$ has $h$ vertices.
Then every maximal graph in $\G$ has at least $\max (\frac{7n}{2}-h-5, 3n-6)$ edges.
Moreover, this bound is tight when $h=n$.
\end{theorem}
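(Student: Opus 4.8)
The plan is to reduce both lower bounds to a single upper bound on the number of edges shared by the two triangulations supplied by Lemma~\ref{lem:tri}, and then to bound the shared edges via a local analysis of non-flippable edges. By Lemma~\ref{lem:tri} I may write $E=E'\cup E''$ with $T'=(S,E')$ and $T''=(S,E'')$ triangulations. Every triangulation of $S$ has $3n-h-3$ edges, and $T',T''$ share the $h$ convex-hull edges; setting $k:=|E'\cap E''|-h$ (the number of shared \emph{interior} edges), I get $|E|=|E'|+|E''|-|E'\cap E''|=6n-3h-6-k$. Hence $|E|\ge 3n-6$ is equivalent to $k\le 3(n-h)$, and $|E|\ge \tfrac{7n}{2}-h-5$ is equivalent to $k\le \tfrac{5n}{2}-2h-1$. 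So, apart from tightness, the theorem becomes the problem of bounding $k$ from above.

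The main tool is a local count of non-flippable edges. By Lemma~\ref{lem:flip} every shared interior edge is non-flippable in $T'$, so it suffices to bound the non-flippable interior edges of one triangulation. First I would note that if an interior edge $uv$ with opposite apices $w,x$ is non-flippable, then the quadrilateral $uwvx$ is non-convex and its unique reflex vertex is one of $u,v$; moreover this reflex vertex is necessarily an \emph{interior} vertex of $S$, since a hull vertex cannot lie inside a triangle spanned by other points. I then charge each non-flippable edge to its reflex vertex. Fixing an interior vertex $p$ of degree $d$ and writing $\alpha_1,\dots,\alpha_d$ for the angles of the triangles around $p$ (which sum to $2\pi$), the spoke flanked by the triangles of angles $\alpha_{i-1},\alpha_i$ is charged to $p$ exactly when $\alpha_{i-1}+\alpha_i>\pi$. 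Two charged spokes whose flanking triangles are all distinct would contribute four distinct angles summing to more than $2\pi$, which is impossible; hence any two charged spokes share a flanking triangle, so in cyclic order they are pairwise consecutive. This gives at most $3$ charged spokes per interior vertex, and at most $2$ unless $d=3$. Summing the crude bound $3$ over the $n-h$ interior vertices yields $k\le 3(n-h)$, and therefore $|E|\ge 3n-6$.

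For the stronger estimate $k\le \tfrac{5n}{2}-2h-1$, which dominates when $h\le \tfrac n2+1$, the crude bound of $3$ per vertex is insufficient, and I would exploit that shared edges are non-flippable in \emph{both} triangulations. The refined local bound shows that a vertex can carry $3$ shared charges only if it has degree $3$ in $T'$ with all three incident edges shared; these same three edges then force $p$ to have degree $3$ in $T''$ as well, and they force the surrounding triangle $abc$ (the link of $p$) to consist of three further shared, hence non-flippable, edges. The crux is to show that such \emph{heavy} vertices number at most $\tfrac n2-1$: they form an independent set, their link triangles are distinct, and the edges of those links are non-flippable edges whose reflex vertices lie among the non-heavy interior neighbours of the heavy vertices, each of which has bounded reflex capacity. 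Balancing the three spokes produced by each heavy vertex against the reflex capacity its link edges consume at neighbouring vertices is the main obstacle, and is precisely where the factor must be pushed from the naive $3(n-h)$ down to the required value; I expect this to need a global discharging argument rather than the purely local count that suffices for $3n-6$.

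Finally, for tightness when $h=n$ I would argue that in convex position every diagonal is flippable: the two triangles adjacent to an interior edge $uv$ have apices $w,x$ on opposite sides of $uv$, and since all four points are in convex position the quadrilateral $uwvx$ is convex. Hence by Lemma~\ref{lem:flip} no interior edge can be shared, so $k=0$ and \emph{every} maximal biplane graph has exactly $|E|=6n-3h-6=3n-6$ edges. Since $3n-6\ge \tfrac{5n}{2}-5=\tfrac{7n}{2}-h-5$ for $h=n$, the maximum in the statement equals $3n-6$ and is attained, proving tightness.
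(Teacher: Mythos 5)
Your reduction is sound as far as it goes: writing $|E|=6n-3h-6-k$ where $k$ is the number of shared interior edges, your charging argument for $k\le 3(n-h)$ (every non-flippable interior edge of $T'$ has a reflex endpoint, that endpoint is an interior vertex, and two spokes charged to the same vertex must share a flanking triangle, so each interior vertex absorbs at most $3$ charges) is correct and yields the $3n-6$ bound. Your tightness argument for $h=n$ is also valid, and is in fact a nice self-contained alternative to the paper's, which instead invokes Lemma~1(i) of~\cite{GHKMSSTT13-II} (every biplane graph on points in convex position is planar as an abstract graph, hence has at most $3n-6$ edges).

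The genuine gap is the bound $\frac{7n}{2}-h-5$, which is the substantive half of the theorem: it dominates whenever $h\le\frac{n}{2}+1$, i.e.\ for point sets with few hull vertices. You explicitly do not prove it. Your sketch about ``heavy'' vertices (degree-$3$ vertices carrying three shared charges), their independence, and a hoped-for discharging argument bounding them by $\frac{n}{2}-1$ is a statement of intent, not an argument---you say yourself that balancing the charges ``is the main obstacle.'' What you are attempting to reprove from scratch is exactly the theorem of Hoffmann et al.~\cite{hsstw11} that every triangulation of $S$ contains at least $\max\left(\frac{n}{2}-2,\, h-3\right)$ flippable edges: your local count recovers the easy $h-3$ part (equivalent to your $k\le 3(n-h)$), but the $\frac{n}{2}-2$ part requires a considerably harder global analysis, which is the content of that paper and is not reproduced by any purely local bound of the type you use. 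The paper's own proof sidesteps this entirely by citing the result: since by Lemma~\ref{lem:flip} every edge of $E_1\cap E_2$ is flippable in neither triangulation, all flippable edges of $T_1$ lie in $E_1\setminus E_2$, hence $|E|=|E_2|+|E_1\setminus E_2|\ge (3n-h-3)+\max\left(\frac{n}{2}-2,\, h-3\right)$, which gives both lower bounds simultaneously. Your proof becomes complete---and structurally identical to the paper's---once you replace the discharging sketch by an appeal to that flippability theorem; as written, the case $h\le\frac{n}{2}+1$ is unproven.
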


\begin{proof}
Let $T_1=(S,E_1)$ be an arbitrary
triangulation of $S$, with $3n-h-3$ edges.  Hoffmann et al.~\cite{hsstw11}
proved that every triangulation of $S$ contains at least $\max (\frac{n}{2}-2, h-3)$
flippable edges.

Now, let $G=(S,E)$ be a maximal biplane graph in $\mathcal{G}_2(S)$, and
suppose that $T_1=(S,E_1)$ and $T_2=(S,E_2)$ are two triangulations such
that $E=E_1\cup E_2$. By Lemma~\ref{lem:flip}, every edge in $E_1\cap E_2$
is flippable in neither $T_1$ nor $T_2$. Thus all flippable edges that are in $E_1$
must be in $E_1\setminus E_2$. Hence, $|E_1\setminus E_2|\ge \max (\frac{n}{2}-2,
h-3)$, and so $|E|=|E_2|+|E_1\setminus E_2|\ge 3n-h-3+\max
(\frac{n}{2}-2, h-3)\ge \max (\frac{7n}{2}-h-5, 3n-6)$.

When $h=n$, we show that the lower bound $\max (\frac{7n}{2}-h-5, 3n-6)=3n-6$ is the best possible. Indeed, every biplane graph on a set of $n$ points in convex position is planar as an abstract graph (Lemma 1(i) of \cite{GHKMSSTT13-II}), hence it has at most $3n-6$ edges.
\end{proof}

\begin{theorem}
Let $S$ be a set of $n\geq 3$ points in the plane such that $\ch(S)$ has $h$ vertices.
Then every maximum graph in $\G$ has at least $4n-h-6$ edges if $h\geq 4$ or $n=3$; and at least $4n-h-7$ edges
if $h=3$ and $n>3$. Moreover, these bounds are tight.
\end{theorem}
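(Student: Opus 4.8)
The plan is to reduce the statement to a question purely about pairs of triangulations, and then to split the work into a universal construction (the lower bound) and an extremal family (tightness). By Lemma~\ref{lem:tri} every maximal biplane graph is a union $E_1\cup E_2$ of two triangulations, and conversely every such union is a biplane graph; since a maximum graph is in particular maximal, the maximum number of edges of a graph in $\G$ equals $\max_{T_1,T_2}|E_1\cup E_2|$. As each triangulation has $3n-h-3$ edges, this maximum equals $6n-2h-6-\min_{T_1,T_2}|E_1\cap E_2|$. Hence the claimed lower bound $4n-h-6$ (resp.\ $4n-h-7$) is equivalent to exhibiting, for \emph{every} $S$, two triangulations with $|E_1\cap E_2|\le 2n-h$ (resp.\ $\le 2n-2$ when $h=3$), while tightness is equivalent to exhibiting point sets on which \emph{every} two triangulations share at least that many edges.

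For the construction I would induct on the number $I=n-h$ of interior points. In the base case $I=0$ (convex position), take $T_1$ and $T_2$ to be the two fans rooted at adjacent hull vertices $v_1,v_2$: they share exactly the $h$ hull edges, giving a union of $3n-6=4n-h-6$ edges, and for $h\ge 4$ they moreover share \emph{no common triangle}. In the inductive step, starting from $T_1',T_2'$ on $S\setminus\{p\}$ with no common triangle, I insert the interior point $p$ into the triangle of $T_1'$ and of $T_2'$ containing it, joining $p$ to the three vertices of each. Because the two containing triangles are distinct faces, their vertex sets coincide in at most two vertices, so of the six new edges at most $s\le 2$ coincide; thus the union grows by at least $6-s\ge 4$ and the shared count grows by at most $2$. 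Iterating gives $|E_1\cup E_2|\ge (3h-6)+4(n-h)=4n-h-6$ and $|E_1\cap E_2|\le h+2(n-h)=2n-h$, as required.

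The $h=3$ correction falls out of the same bookkeeping: when $h=3$ the base is a single triangle with a \emph{unique} triangulation, so the first interior point is forced into the same triangle in both copies ($s=3$), contributing only $3$ instead of $4$; this accounts for the extra $-1$, yielding $4n-10=4n-h-7$, while the degenerate case $n=3$ gives exactly the triangle with $3=4n-h-6$ edges. I expect the main obstacle to be maintaining the invariant that drives the $+4$ accounting: an insertion with $s=2$ creates a common triangle through $p$, which could force a later point into a shared triangle and cost more than one unit. Making this rigorous will require choosing the insertion order (and possibly a local re-triangulation after an $s=2$ step) so that at most one forced $s=3$ event occurs, and then only when $h=3$; verifying that this control is always possible for $h\ge 4$ is the crux of the argument.

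Finally, for tightness I would anchor on convex position: when $h=n$ the bound reads $3n-6$, and every biplane graph on points in convex position is planar (Lemma~1(i) of~\cite{GHKMSSTT13-II}), so its maximum is at most $3n-6$, matching exactly. For smaller $h$ (in particular the triangular-hull case) I would place the $n-h$ interior points in a tightly clustered, nested configuration deep inside the hull so that every triangulation is forced to reuse the same set of roughly $2n-2h$ interior edges; this makes every pair of triangulations share at least $2n-h$ (resp.\ $2n-2$) edges and caps the maximum at the bound. The harder direction here is this forced-sharing lower bound, which amounts to showing that the placement leaves no freedom to gain more than $4$ edges per interior point; this should follow from the earlier bound of Hutchinson et al.~\cite{HSV99} together with a direct analysis of which interior edges are common to all triangulations of the chosen configuration.
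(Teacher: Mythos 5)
Your opening reduction (maximizing $|E_1\cup E_2|$ over pairs of triangulations is equivalent to minimizing $|E_1\cap E_2|$, since a maximum graph is maximal and Lemma~\ref{lem:tri} applies) and your convex-position base case are both sound and consistent with the paper. The genuine gap is exactly the step you defer as ``the crux'': your $+4$ accounting requires the two triangles containing the next inserted point to be distinct, but the invariant you propose (no common triangle) is destroyed by the construction itself --- if $p$ lands in triangle $abc$ of $T_1'$ and $abd$ of $T_2'$, then after insertion both triangulations contain the triangle $pab$ --- and no insertion order alone repairs this. The failure is not a corner case: for $h=3$ the base triangulation is unique, so naive insertion keeps the two copies identical forever and yields only $3n-6$ edges, far below $4n-10$. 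The paper closes this hole with an invariant-free device: insert the interior points by increasing $x$-coordinate, so each inserted point $s$ is the rightmost interior point of the current set. Then in the degenerate case $\Delta'=\Delta''$, the rightmost vertex of $\Delta'$ is a hull vertex, and (excluding the wheel triangulation, ruled out when $h\geq4$ or $n\geq5$) some edge $e$ of $\Delta'$ is flippable in the augmented triangulation $T_1''$ (Property~3 of~\cite{GHKMSSTT13-II}); flipping $e$ in $T_1''$ while retaining it in $T_2''$ creates a fourth new edge incident to $s$, so the union still grows by $4$. The only unavoidable loss is at the very first insertion when $h=3$ (the case $n=4$), which is where the $-1$ in $4n-h-7$ originates. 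Without this (or an equivalent) repair mechanism, your induction does not go through.

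The tightness half has a second gap. The case $h=n$ is fine and identical to the paper, but for $h<n$ you only gesture at a ``tightly clustered, nested configuration deep inside the hull'' and claim the forced sharing ``should follow from'' the $6n-18$ bound of~\cite{HSV99}. That bound cannot do this work: it limits the total edge count of a biplane graph but says nothing about which edges must lie in \emph{both} layers; worse, clustering points deep inside the hull tends to produce few forced edges rather than many (the two diagonals of a tiny convex quadrilateral cross only each other, so neither is forced, and even its sides can be crossed by edges from hull vertices). What is needed, and what the paper supplies, is an explicit placement making $2n-h$ specific edges uncrossable by \emph{any} segment spanned by $S$, hence present in every triangulation: for $h\geq4$, a convex $h$-gon with the interior points on a circular arc inside the triangle $v_1v_2v_h$, so that the $h$ hull edges, the $n-h$ edges at $v_1$, and the $n-h$ path edges along the arc are all uncrossable, giving $|E|\leq 2(3n-h-3)-(2n-h)=4n-h-6$; and for $h=3$, an arc inside a triangle forcing all $n-1$ edges at $v_1$, with $S\setminus\{v_1\}$ in convex position so that, by Lemma~1(i) of~\cite{GHKMSSTT13-II}, the remaining edges form a planar graph with at most $3(n-1)-6$ edges, for a total of $4n-10=4n-h-7$. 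You would need to specify a configuration and prove such an uncrossability statement for your cluster; as written, the tightness claim for $h<n$ is unsupported.
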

\begin{proof}
{\bf Lower bounds.} We proceed by induction on $n$. The base case is $n=h$, where the union of two triangulations of a convex $n$-gon gives a biplane graph with $n+2(n-3)=3n-6=4n-h-6$ edges. In this case, the claim follows directly from Theorem~\ref{theo_minedges}.

Suppose now that $n>h$, and that the claim holds for every set of $n-1$ points whose convex hull has $h$ vertices. Let $s\in S$ be a rightmost point in the interior of $\ch(S)$, and let $S'=S\setminus \{s\}$.
Let $G'=(S',E')$ be a maximum biplane graph on $S'$. By induction, $G'$ has at least $4(n-1)-h-6$ edges if $h\geq 4$ or $n-1=3$; and at least $4(n-1)-h-7$ edges if $h=3$ and $n-1>3$. By Lemma~\ref{lem:tri}, $G'$ is the union of two triangulations $T_1'=(S',E_1')$ and $T_2'=(S',E_2')$.

We construct a biplane graph $G=(S,E)$ by augmenting $G'$ with the new vertex $s$ and some incident edges.
If $h=3$ and $n=4$, then $G'$ is a triangle, and $s$ can only be joined to the 3 vertices of $G'$.
Hence $G$ has $6=4n-h-7$ edges, as required.

If $h\geq 4$ or $n\geq 5$, we join $s$ to at least 4 vertices of $G'$. Point $s$ lies in the interior of a triangle $\Delta'$ of $T_1'$, and a triangle $\Delta''$ of $T_2'$. We can augment $T_1'$ and $T_2'$ each with 3 new edges that join $s$ to the corners of $\Delta'$ and $\Delta''$, respectively, to obtain two new triangulations $T''_1$ and $T''_2$.  If  $\Delta'$ and $\Delta''$ together have at least 4 distinct vertices, then the induction step is complete.

It remains to consider the case in which $\Delta'$ and $\Delta''$ together have only 3 distinct vertices, i.e., $\Delta'=\Delta''$. Since $h\geq 4$ or $n\geq 5$, $T_1''$ cannot be the wheel triangulation. Recall that, by construction, we know that there is no interior point that is to the right of $s$. In particular, the rightmost vertex of $\Delta'$ must be a vertex of $\ch(S)$. It follows that  one of the edges of  $\Delta'$ is flippable in $T_1''$ (proof of this claim is given in Property~3 of~\cite{GHKMSSTT13-II}). Thus, by flipping this edge in $T''_1$ we introduce an additional edge which (together with the three edges from $s$) allows us to augment $G'$ as desired.

\begin{figure}[tb]
\centering
\includegraphics[scale=0.6]{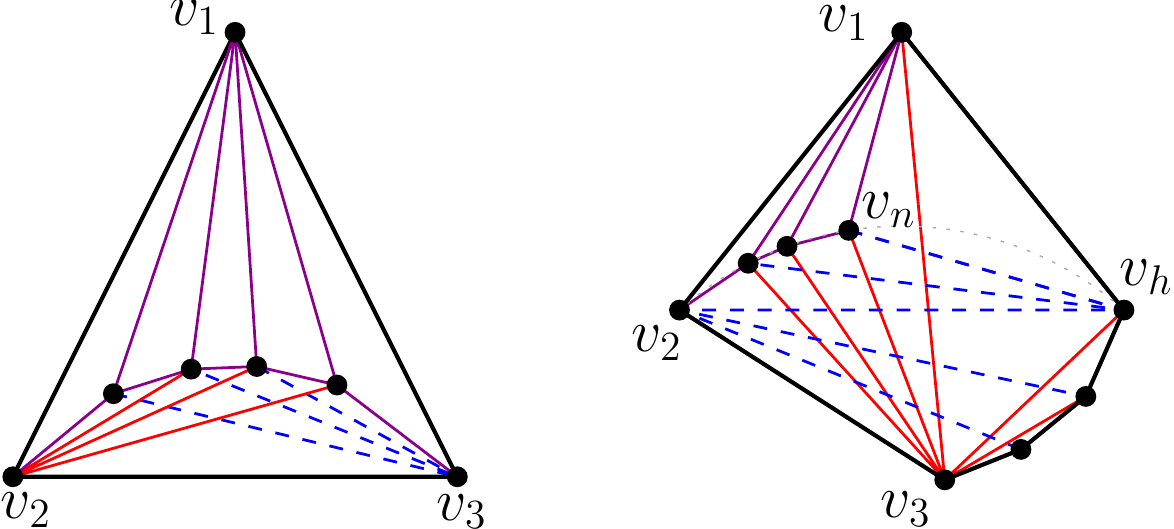}
\caption{Two point sets with convex hulls of size $h=3$ (left) and $h=6$ (right), where
the maximum biplane graph has $4n-h-7$ and $4n-h-6$ edges, respectively.}
\label{fig:MaximumGraphs}
\end{figure}

\noindent{\bf Tightness.} We show that there are point sets that attain the above lower bounds for all $h,n\in \mathbb{N}$, $3\leq h\leq n$.  For $h=3$ and $n>3$, our construction $S$ consists of three vertices of a triangle $\Delta=(v_1,v_2,v_3)$ and $n-3$ points in the interior of $\Delta$, lying on a circular arc between $v_2$ and $v_3$ (Figure~\ref{fig:MaximumGraphs}, left). Every maximal graph in $\G$ contains the edges between $v_1$ and the other $n-1$ vertices, since none of these edges can be crossed by other edges. The $n-1$ points in $S\setminus \{v_1\}$ are in convex position, and admit a biplane graph with $3(n-1)-6$ edges by Theorem~\ref{theo_minedges}. The total number of edges is at most $n-1+(3n-9)=4n-10=4n-h-7$.

For $n=3$, we can simply take a triangle. For $h\geq 4$, our construction of $S$ consists of the vertices of a convex $h$-gon $(v_1,\ldots, v_h)$, and $n-h$ points in the interior of the triangle $\Delta=(v_1,v_2,v_h)$ lying on a circular arc between $v_2$ and $v_h$ such that $v_1v_3$ separates all interior points from the vertices $v_4,\ldots, v_h$ (Figure~\ref{fig:MaximumGraphs}, right). Every maximal graph in $\G$ contains the $h$ hull edges, the $n-h$ edges between $v_1$ and the interior vertices, and the $n-h$ edges of the path $(v_2,v_{h+1},\ldots, v_n)$, since none of these edges can be crossed by other edges. These $2n-h$ edges are part of every triangulation contained in a maximal biplane graph in $\G$.

By Lemma~\ref{lem:tri}, there exist two triangulations $T_R=(S,R)$ and $T_B=(S,B)$ such that $E=R\cup B$. Every triangulation on $n$ points, $h$ of which lie on the hull, has $|R|=|B|=3n-h-3$ edges. As noted above, we have $|R\cap B|\geq 2n-h$. We conclude that $|E|= |R|+|B|-|R\cap B|\leq 2(3n-h-3)-(2n-h)=4n-h-6$.
\end{proof}

\section{Constructing Maximal Biplane Graphs}\label{sec_augment}
We now consider computational aspects related to biplane graphs. The most fundamental algorithmic question is recognition, thus we start by showing how to determine if a given graph is biplane.

\begin{lemma}\label{lem:testing}
Given a geometric graph $G=(S,E)$ with $n$ vertices and $m$ edges,
there is an $O(n\log n)$-time algorithm that tests whether $G$ is biplane
and produces, if possible, a partition $E=E_1\cupdot E_2$ such that
both $(S,E_1)$ and $(S,E_2)$ are plane graphs.
\end{lemma}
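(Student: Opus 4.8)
The plan is to reduce biplane recognition to a graph bipartiteness test, and to spend the time budget on testing that bipartiteness without ever writing down all the crossings. First I would bound the size of the input: since a biplane graph is by definition the union of two plane graphs, it has at most $2(3n-6)=6n-12$ edges. While reading $E$ I stop and report ``not biplane'' the moment more than $6n-12$ edges have been seen, so I may assume $m\le 6n-12=O(n)$; in particular $O(n\log n)$ and $O(m\log m)$ coincide.

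Next I would reformulate the problem via the \emph{crossing graph} $H$, whose vertex set is $E$ and in which two edges of $G$ are adjacent exactly when the corresponding segments cross in their interiors. Edges that merely share an endpoint never cross, since $S$ is in general position, and so are deliberately left non-adjacent in $H$. A partition $E=E_1\cupdot E_2$ makes both $(S,E_1)$ and $(S,E_2)$ plane if and only if no two crossing edges lie in the same part, i.e. if and only if $(E_1,E_2)$ is a proper $2$-coloring of $H$. Hence $G$ is biplane precisely when $H$ is bipartite, and any proper $2$-coloring of $H$ yields the required partition.

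The hard part is that $H$ may be dense: even when $G$ is biplane, its two plane layers can cross $\Theta(n^2)$ times, so $H$ can have $\Theta(n^2)$ edges and cannot be constructed explicitly within the time bound. To handle this I would invoke Eppstein's technique for bipartiteness of geometric intersection graphs~\cite{Epp09}. Using a plane sweep together with a balanced search tree over the segments currently met by the sweep line, one builds in $O(m\log m)=O(n\log n)$ time a sparse subgraph $H'\subseteq H$ with only $O(m\log m)$ edges that is \emph{bipartiteness-equivalent} to $H$: it has the same connected components, and each component is bipartite in $H'$ if and only if it is bipartite in $H$. The only adaptation beyond the pure segment-intersection setting is to discard candidate adjacencies that come from a shared endpoint rather than a genuine interior crossing; these are trivial to recognize as the sweep produces them.

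Finally I would run a standard linear-time BFS/DFS $2$-coloring on $H'$. If an odd cycle appears, $H$ (hence $H'$) is not bipartite and $G$ is reported non-biplane; otherwise the $2$-coloring of $H'$ is, by bipartiteness-equivalence, a proper $2$-coloring of all of $H$, and its two color classes are the desired $E_1$ and $E_2$ with $(S,E_1)$, $(S,E_2)$ plane. The total running time is $O(n\log n)$. I expect the main obstacle to be exactly the construction and correctness of the sparse certificate $H'$—namely, proving that a near-linear family of ``sweep-adjacent'' crossing pairs preserves the parity of every cycle of the full, possibly quadratic-size crossing graph, so that testing bipartiteness on $H'$ is equivalent to testing it on $H$.
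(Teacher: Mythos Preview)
Your approach is essentially the paper's: bound $m=O(n)$ (the paper uses the sharper $6n-18$ from Hutchinson et al.\ for $n\ge 8$, but your $6n-12$ works just as well), reduce biplanarity to bipartiteness of the segment-crossing graph, and invoke Eppstein's $O(m\log m)$ algorithm~\cite{Epp09}. The paper treats Eppstein's result as a black box rather than unpacking the sparse-certificate construction, and your worry about shared endpoints is moot since open segments that share only an endpoint do not intersect.
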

\begin{proof}
By the result of Hutchinson et al.~\cite{HSV99}, we know that if  $m>6n-18$ and $n\geq 8$, then $G$ cannot be biplane. The case $n<8$ can be solved by brute force, thus from now on we assume that $m\leq 6n-18$ and $n\geq 8$. Let $G_X$ be the intersection graph of the open line segments in $E$, that is, the nodes of $G_X$ correspond to the edges of $G$, and two nodes are adjacent in $G_X$ if and only if the corresponding edges cross. An edge partition $E=E_1\cupdot E_2$, where $G_1=(S,E_1)$ and $G_2=(S,E_2)$ are plane graphs, corresponds to a bipartition of $G_X$. Given a set of $m$ line segments in the plane, an $O(m\log m)$-time algorithm by Eppstein~\cite{Epp09} returns either an odd cycle in the intersection graph $G_X$ or a 2-coloring of the segments such that segments of the same color are disjoint.\footnote{We note that Eppstein's algorithm does not explicitly construct the intersection graph $G_X$, which may have $\Omega(m^2)$ edges.} Recall that $m\in O(n)$, thus overall the algorithm runs in $O(n\log n)$ time.
\end{proof}

Another natural algorithmic question is how to augment a given biplane graph to a maximal one. That is, given $G\in\G$, can we find a maximal graph $G'\in\G$ such that $G\subseteq G'$? It is easy to augment a plane graph $G\in \mathcal{G}_1(S)$ to a triangulation: we can augment $G$ with all edges of the convex hull $\ch(S)$, and then triangulate each bounded face independently. However, it is not obvious how to augment two layers $E=E_1\cup E_2$ into maximal plane graphs independently. That is, the converse of Lemma~\ref{lem:tri} is not true: if $T'=(S,E')$ and $T''=(S,E'')$ are triangulations, then $G=(S,E'\cup E'')$ is not necessarily maximal biplane.

We can use Lemma~\ref{lem:testing} to greedily augment $G$ into a maximal biplane graph: consider all $O(n^2)$ edges of the complement of $G$ successively, and augment the graph with each new edge as long as the augmented graph remains biplane. Since each test takes $O(n\log n)$ time, we obtain an algorithm that runs in $O(n^3 \log n)$-time. In the following we provide an alternative faster method that exploits the geometric properties of maximal biplane graphs.

\begin{figure}[tb]
        \centering
        \begin{subfigure}[b]{0.45\textwidth}
                \centering
                \includegraphics[page=1,width=0.7\textwidth]{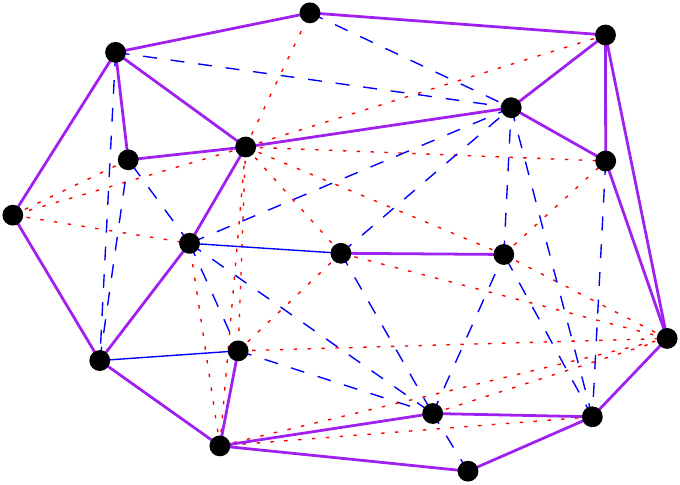}
                \caption{}
                \label{fig:gull}
        \end{subfigure}%
        ~ 
        \begin{subfigure}[b]{0.45\textwidth}
                \centering
                \includegraphics[page=2,width=0.7\textwidth]{example_augmentation}
                \caption{}
                \label{fig:tiger}
        \end{subfigure}
        \caption{(a): A biplane graph with three types of edges: only in red triangulation (dotted), only in blue triangulation (dashed), and in both (solid). (b): Subdivision created by purple edges.}\label{fig:example_augmentation}
\end{figure}

By definition, any graph $G\in\G$ decomposes into two plane graphs. We greedily augment the two plane graphs into
triangulations. Let $T_R=(S,R)$ and $T_B=(S,B)$ be the obtained triangulations, and $G'=(S,R\cup B)$. By construction, we  have $G\subseteq G'$ and $G'\in \G$. Now we classify the edges of $G'$ as \emph{red} if they appear only in $R$, \emph{blue} if they appear only in $B$, or \emph{purple} if they appear in both $R$ and $B$ (see Figure~\ref{fig:example_augmentation}(a)). Let $P=R\cap B$ denote the set of purple edges.

If a purple edge is flippable in $T_R$ or $T_B$, then we can flip it in one triangulation
and retain it in the other, thereby increasing the total number of edges by one (and decreasing the
number of purple edges by one). Intuitively, our algorithm aims to minimize the number of purple edges (thus having the maximum number of edges overall).

A natural approach would be to flip purple edges whenever they are flippable in either $T_R$ or $T_B$. However, as mentioned before, the decomposition $E=B\cup R$ is not unique: it is possible that a purple edge is not flippable in either triangulation, but there is a different decomposition $E=R'\cup B'$ that admits a flippable edge in $R'\cap B'$. To overcome this difficulty, we introduce the concept of a \emph{colorblind flippable} edge.

Consider the plane graph $(S,P)$ formed by all purple edges. The purple graph $(S,P)$ is a subgraph of both triangulations, $T_R$ and $T_B$, and contains all convex hull edges. Each bounded face of $(S,P)$ is a weakly simple polygon (possibly with holes), see Figure~\ref{fig:example_augmentation}(b). Denote by $\face_1,\ldots, \face_k$ the bounded faces of the purple graph $(S,P)$.

Let $e\in P$ be a purple edge that is not an edge of the convex hull. We say that $e$ is \emph{colorblind flippable} (with respect to $R$ and $B$) if it is flippable in the triangulation $T_R$ or $T_B$, or $e$ is adjacent to two different faces of $(S,P)$, and is adjacent to a red triangle in $T_R$ and a blue triangle in $T_B$ forming a convex quadrilateral. With this definition we can obtain a local characterization of maximal biplane graphs. We also use the term \emph{colorblind flippability} to refer to the condition of an edge as being colorblind flippable.

Naturally, the fact that an edge is red or blue depends on the choice of $T_R$ and $T_B$. However, the same does not hold for purple edges. Regardless of the choice of the triangulations, an edge of $G$ is purple if and only if it is not crossed by any other edge of $G$: if it is crossed, then it cannot appear in both triangulations. Otherwise, its insertion cannot break the planarity property of either triangulation (which implies that it was in both triangulations by maximality). Thus, the fact that an edge is purple does not depend on the choice of the triangulations. In fact, we show (Corollary~\ref{cor:purple}) that the colorblind flippability of a purple edge does not depend on the choice of the triangulations, either. More importantly, this observation yields the following characterization of maximal biplane graphs by a local property.

\begin{theorem}\label{theo:colorblind}
Let $G=(S,E)$ be a biplane graph, and let $T_R=(S,R)$ and $T_B=(S,B)$ be two triangulations such that $E=R\cup B$. Then $G$ is a maximal biplane graph if and only if no edge $e\in R\cap B$ is colorblind flippable with respect to $R$ and $B$.
\end{theorem}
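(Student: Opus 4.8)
The plan is to translate biplaneness into a property of the \emph{crossing graph} $X(E)$, whose vertices are the edges of $E$ and whose (graph-)edges join pairs of segments that cross: a geometric graph is biplane precisely when $X(E)$ is bipartite, since a proper $2$-colouring of $X(E)$ is exactly a partition of $E$ into two crossing-free (hence plane) layers. I would first record two structural facts. First, as already observed in the text, an edge lies in $R\cap B$ (is purple) if and only if it is crossed by no other edge of $G$; thus purple edges are isolated vertices of $X(E)$, and the red-only and blue-only edges give a proper $2$-colouring of the rest. Second, and crucially, two edges can cross only if they lie in the interior of the same bounded face $\face_i$ of the purple graph $(S,P)$: no edge crosses a purple edge, so every non-purple edge stays inside a single face, and segments in distinct faces have disjoint interiors. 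Consequently every connected component of $X(E)$ lives inside one face $\face_i$, and within such a component the bipartition coincides with the red/blue colouring.

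For the direction ``$G$ maximal $\Rightarrow$ no colourblind flippable edge'' I argue the contrapositive, extending the proof of Lemma~\ref{lem:flip}. If a purple edge $e$ is flippable in $T_R$ (resp.\ $T_B$), flipping it there while retaining it in the other layer yields a biplane graph with edge set $E\cup\{f\}\supsetneq E$, where $f$ is the new diagonal and $f\notin E$ because $f$ crosses the purple edge $e$. For the genuinely new case, let $e=uv$ be adjacent to two different faces $\face_1,\face_2$, with a red triangle $uvw$ of $T_R$ on the $\face_1$ side and a blue triangle $uvx$ of $T_B$ on the $\face_2$ side forming a convex quadrilateral $uwvx$; set $f=wx$. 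I claim $E\cup\{f\}$ is biplane. The segment $f$ meets the boundary purple edges only at $e$, lies inside the red triangle $uvw$ on the $\face_1$ side (so it crosses no red-only edge there) and inside the blue triangle $uvx$ on the $\face_2$ side (so it crosses no blue-only edge there); hence in $\face_1$ it crosses only blue-only edges and in $\face_2$ only red-only edges. Because $\face_1\neq\face_2$, these two groups of neighbours of $f$ lie in disjoint components of $X(E)$, so I may re-orient the $2$-colouring of the $\face_2$-components independently and then colour $f$ to disagree with every edge it crosses (including $e$). This is a proper $2$-colouring of $X(E\cup\{f\})$, so $E\cup\{f\}$ is biplane and strictly larger, contradicting maximality.

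For the converse, ``no colourblind flippable edge $\Rightarrow G$ maximal'', I again argue the contrapositive. If $G$ is not maximal, some biplane $G''\supsetneq G$ exists; choosing any $f\in G''\setminus E$ gives a single edge with $E\cup\{f\}$ biplane and $f\notin E$. Since $T_R$ and $T_B$ are triangulations, $f$ must cross at least one red and at least one blue edge (otherwise $f$ could be added to one already-maximal plane layer). Biplaneness of $E\cup\{f\}$ then forces, within every component of $X(E)$, all edges crossed by $f$ to have the same colour. The task is to read off a colourblind flippable purple edge from the way $f$ threads through the purple faces: I would follow $f$, locate a purple edge $e=uv$ at which the crossings ``switch colour'', take the red triangle of $T_R$ on one side and the blue triangle of $T_B$ on the other, and verify that, together with the local convex position forced by the triangle apices lying on the two sides of $e$, they bound the required convex quadrilateral (or else that $e$ is already flippable in one triangulation).

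This last extraction is the main obstacle. The delicate points are: the degenerate subcase in which $f$ crosses no purple edge and stays inside a single face, where one must instead produce an ordinary flippable purple edge on the boundary of that face; the exclusion of purple \emph{bridges}, for which the two sides are the same face and condition (b) does not apply; and, in the main case, proving that the selected red and blue triangles actually bound a convex quadrilateral rather than merely abutting $e$. I expect the convexity to follow from the fact that each apex is the third vertex of the triangle that $f$ enters on its side of $e$, together with the observation (used above) that $f$ runs inside both triangles near the crossing; the bookkeeping needed to guarantee that the \emph{same} edge $e$ witnesses a red triangle on one side and a blue triangle on the other is where the care is required.
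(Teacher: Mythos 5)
Your forward implication (a colorblind flippable edge contradicts maximality) is correct, and it is essentially the paper's own argument in different clothing: your re-orientation of the $2$-colouring of the crossing-graph components inside $\face_2$ is exactly the paper's exchange of $R_{\face_i}$ with $B_{\face_i}$, after which the purple edge becomes an ordinary flippable edge of one triangulation and can be flipped while being retained in the other.

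The converse direction, however, is not a proof: you explicitly leave the extraction of the colorblind flippable edge as ``the main obstacle,'' and that extraction is the entire content of the hard half of Theorem~\ref{theo:colorblind}. Two concrete things are missing. First, your plan tacitly needs the fact that every proper $2$-colouring of the crossing graph $X(E)$ is obtained from $(R,B)$ by swapping colours in \emph{entire} faces of the purple graph; equivalently, that the chords inside one face $\face_i$ form a \emph{single} connected component of $X(E)$. You only establish the containment ``component $\subseteq$ face.'' Without the reverse direction (the paper's Lemma~\ref{lem:colorchange}), the colouring of $X(E\cup\{f\})$ restricted to $E$ need not split $E$ into two \emph{triangulations}, so the red and blue triangles you want to read off at a purple crossing need not exist, and the invariance of colorblind flippability under recolouring (the paper's Corollary~\ref{cor:purple}, which is deduced \emph{from} the theorem, so you may not assume it) is unavailable. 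Second, even after recolouring so that $f$ crosses only blue and purple edges, you still need the statement: a segment between two vertices that is absent from a triangulation crosses at least one edge that is \emph{flippable} in that triangulation. This is not bookkeeping --- an individual crossed edge can easily fail to be flippable (the two opposite apexes need not form a convex quadrilateral), and your proposed witness, a purple edge ``at which the crossings switch colour,'' need not exist at all: if $f$ crosses only purple edges there is no switch to locate (and in that case your claim that $f$ must cross at least one red and one blue edge is false if read as red-only/blue-only). The existence of a flippable crossed edge is precisely the engine of flip connectivity of constrained triangulations, and this is what the paper imports wholesale: it first aligns the colouring of $E$ with that of a larger biplane graph $E_{\max}$ face by face (using Lemma~\ref{lem:colorchange}), and then invokes the Lee--Lin theorem~\cite{LL86} to obtain a flip sequence of unconstrained edges from $R'$ to $R_{\max}$, whose first flip is a purple edge flippable in $R'$, hence colorblind flippable with respect to $(R,B)$. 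Until you either prove such a Lawson/Lee--Lin-type lemma yourself or cite it, and supply Lemma~\ref{lem:colorchange}, your converse direction has a genuine gap.
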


Before proving Theorem~\ref{theo:colorblind}, we establish a helpful result (Lemma~\ref{lem:colorchange}). Recall that a bounded face $\face$ of the purple plane graph $(S,P)$ is a weakly simple polygon, possibly with holes. A \emph{chord} of $\face$ is an internal diagonal of $\face$ (connecting two vertices through the interior of $\face$). Denote by $R_{\face}$ (resp., $B_{\face}$) the set of chords of $\face$ in $R$ (resp., in $B$). Note that $R_{\face}$  and $B_{\face}$ define two triangulations of $\face$ which, by definition, satisfy $R_{\face}\cap B_{\face}=\emptyset$. By exchanging the triangulations $R_{\face}$ and $B_{\face}$, we obtain a new decomposition $E=R'\cup B'$ into $R'=(R\setminus R_{\face})\cup B_{\face}$ and $B'=(B\setminus B_{\face})\cup R_{\face}$. We show that all decompositions of $E$ into two triangulations can be obtained by such exchanges in some faces of $(S,P)$.

\begin{lemma}\label{lem:colorchange}
Let $\face$ be a weakly simple polygon possibly with holes, and let $R_{\face}$ and $B_{\face}$ be two disjoint sets of chords, each of which forms a triangulation of the interior of $\face$. Then the intersection graph $G_{\face}$ of the open line segments $R_{\face}\cup B_{\face}$ is connected.
\end{lemma}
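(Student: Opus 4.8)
The plan is to exploit that, inside $\face$, crossings occur only between chords of different triangulations, so $G_{\face}$ is bipartite with parts $R_{\face}$ and $B_{\face}$, and then to prove connectivity by combining a local adjacency fact with the connectivity of the region $\face$ itself. First I would record two preliminaries. Since $R_{\face}$ and $B_{\face}$ each triangulate the interior of $\face$, no two chords of the same set cross; hence every edge of $G_{\face}$ joins a chord of $R_{\face}$ to a chord of $B_{\face}$. Moreover $G_{\face}$ has no isolated vertex: if a chord $c\in R_{\face}$ crossed no chord of $B_{\face}$, then $c$ would be an interior diagonal of $\face$ compatible with the triangulation $B_{\face}$, forcing $c\in B_{\face}$ and contradicting $R_{\face}\cap B_{\face}=\emptyset$ (symmetrically for $B_{\face}$). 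Thus every chord is adjacent to at least one chord of the opposite colour.

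Next I would set up the overlay. Superimposing $R_{\face}$ and $B_{\face}$ inside $\face$ produces an arrangement $\mathcal{A}$ whose vertices are the original vertices of $\face$ together with the red--blue crossing points, and whose cells are intersections of one triangle of the $R_{\face}$-triangulation with one triangle of the $B_{\face}$-triangulation; in particular every cell is convex. Because the interior of $\face$ is connected, the cell-adjacency graph of $\mathcal{A}$ (cells adjacent when they share a positive-length edge) is connected, and every interior edge of $\mathcal{A}$ lies on a single chord. The crucial local fact is this: if a chord of $R_{\face}$ and a chord of $B_{\face}$ both bound a common cell and meet at a vertex of that cell which is a crossing point, then these two chords cross there and are therefore adjacent in $G_{\face}$. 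Tracking any connected component $\mathcal{C}$ of $G_{\face}$ along the faces of $\mathcal{A}$, the component label carried by the bounding chords can change only at cell corners that are original vertices of $\face$, never at crossing corners.

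It then remains to upgrade this local adjacency to global connectivity, and this is where I expect the real work to lie. I would argue by contradiction: assuming $G_{\face}$ splits into a component $\mathcal{C}$ and its nonempty complement, the chords of $\mathcal{C}$ form a connected one-complex that meets the remaining chords only at shared vertices of $\face$. Using that $R_{\face}$ and $B_{\face}$ each triangulate all of $\face$, I would aim to show that such a clean separation forces a subregion bounded by $\mathcal{C}$-chords on which the two triangulations must share an interior diagonal, contradicting $R_{\face}\cap B_{\face}=\emptyset$; concretely, one locates a triangle of the $R_{\face}$-triangulation incident to chords of two different components and inspects the chords of $B_{\face}$ crossing it, using that no single chord of $B_{\face}$ may cross chords of two distinct components. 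The main obstacle is precisely this coordination between the two triangulations at the branch vertices and along $\partial\face$: I must rule out a ``monochromatic'' separation surviving at the original vertices, where differently-coloured chords meet without crossing, and I must handle the weak simplicity and the possible holes of $\face$, for which the connectivity of both triangulation dual graphs should provide the key leverage.
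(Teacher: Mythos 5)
Your preliminaries are correct and coincide with facts the paper also relies on: chords of the same colour never cross, so $G_{\face}$ is bipartite, and no chord is isolated, since a red chord crossing no blue chord could be added to the blue triangulation (triangulations are maximal plane graphs), forcing it into $B_{\face}$ and contradicting $R_{\face}\cap B_{\face}=\emptyset$. You also correctly note that no single blue chord can cross red chords from two distinct components. But the proposal has a genuine gap exactly where you say ``this is where I expect the real work to lie'': you never prove that a triangle of one triangulation cannot have its chord-edges distributed over two different components of $G_{\face}$. That statement is essentially the whole content of the lemma. The overlay arrangement, the convexity of its cells, and the observation that component labels can change only at corners that are original vertices of $\face$ do not by themselves exclude a separation in which two alleged components touch only at vertices of $\face$; and your guess at the shape of the contradiction (a subregion on which the two triangulations ``share an interior diagonal'') is not how the contradiction actually materializes.

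The missing step admits a short geometric argument, and it is the heart of the paper's proof. Let $e,e'$ be chord-edges of a triangle $\Delta$ of the blue triangulation. If some red chord crosses both, then $v(e)$ and $v(e')$ are at distance $2$ in $G_{\face}$. Otherwise take red chords $e_r$ crossing $e$ and $e_r'$ crossing $e'$ (they exist by your no-isolated-vertex fact); since $e_r$ enters $\Delta$ through $e$ and, by assumption, cannot leave through $e'$, it either leaves through the third edge $e''$ or ends at the vertex of $\Delta$ opposite $e$ (the interior of $\Delta$ contains no vertices), and similarly for $e_r'$. A case check shows that in every configuration except the one where both red chords cross $e''$, the segments $e_r$ and $e_r'$ would cross each other inside $\Delta$, which is impossible for two chords of the same triangulation. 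Hence both cross $e''$; in particular $e''$ is itself a blue chord, and $v(e),v(e_r),v(e''),v(e_r'),v(e')$ is a path in $G_{\face}$. This local claim, combined with the connectivity of the dual graph of the blue triangulation---which holds for a triangulation of any connected region, so weak simplicity and holes cause no trouble, answering the worry in your last sentence---places all blue chords in one component of $G_{\face}$; symmetrically all red chords lie in one component; and since every red chord crosses some blue chord, $G_{\face}$ is connected. Had you carried out this triangle-level analysis, your contradiction scheme would have collapsed into precisely this argument; without it, the proposal does not prove the lemma.
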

\begin{proof}
For an edge $e\in R_{\face}\cup B_{\face}$, denote by $v(e)$ the corresponding node in the intersection graph $G_{\face}$. The chords $R_{\face}$ and $B_{\face}$ form two distinct triangulations of $\face$, which we call {\em red} and {\em blue} triangulations of $\face$. We prove that if $e,e'\in B_{\face}$ are edges of a triangle $\Delta$ in the blue triangulation, then $G_{\face}$ contains a path between the nodes $v(e)$ and $v(e')$. It follows that all blue chords in $B_{\face}$ must be in the same connected component of $G_{\face}$, since the dual graph of the blue triangulation is connected. Analogously, all red chords in $R_{\face}$ are in the same connected component of $G_{\face}$. Since every red edge crosses a blue edge (and vice versa), $G_{\face}$ is connected.

To prove that $G_{\face}$ contains a path between the nodes $v(e)$ and $v(e')$, we consider two situations, see Figure~\ref{fig:aug_connected_component}. If a red chord $e_r\in R_{\face}$ intersects both $e$ and $e'$, then there is a path of length 2 from $v(e)$ to $v(e')$ in $G_{\face}$. Otherwise, there exists a red edge $e_r\in R_{\face}$ that crosses $e$ and a red edge $e_r'\in R_{\face}$ that crosses $e'$ (recall that non-purple edges are crossed by at least one edge of the opposite color). Since $e_r$ and $e_r'$ do not cross each other, neither of them can be incident to any vertex of $\Delta$. Hence, both $e_r$ and $e_r'$ must cross the third edge, $e''$, of $\Delta$. In particular, this implies that $e''\in B_{\face}$, and there are paths of length 2 from $v(e)$ to $v(e'')$ and from $v(e'')$ to $v(e')$. Thus, $v(e)$ and $v(e')$ are connected in $G_{\face}$ by a path of length at most 4.
\end{proof}

\begin{figure}[htb]
\centering
\includegraphics[scale=.9]{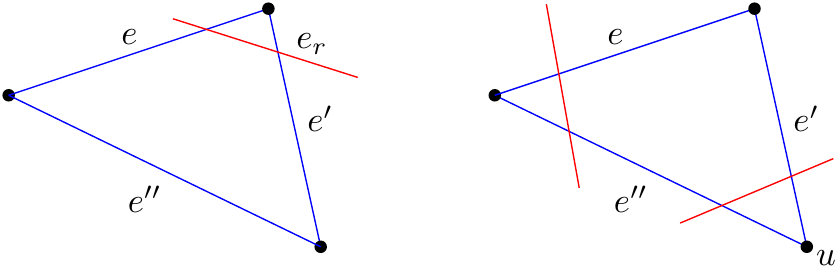}
\caption{ $e,e'\in B_{\face}$ are edges of a triangle $\Delta$ of the blue triangulation.
Left: a red edge $e_r$ crosses both $e$ and $e'$.
Right: no red edge crosses both $e$ and $e'$. Then, there is a red edge that crosses  $e$ and $e''$, and another red edge that crosses $e'$ and $e''$.}
\label{fig:aug_connected_component}
\end{figure}

For each bounded face $\face $ of $(S,P)$, by Lemma~\ref{lem:colorchange}, the intersection graph $G_{\face}$ is bipartite and connected, therefore there is only one way of partitioning the edges of $R_{\face}\cup B_{\face}$ into two plane graphs.
Consequently, if any edge in $R_{\face}$ changes its color, then all edges in $R_{\face}\cup B_{\face}$  must also change their colors. We are now ready to prove Theorem~\ref{theo:colorblind}.

\begin{proof}[Proof of Theorem~\ref{theo:colorblind}.]
Let $G=(S,E)$ be a biplane graph and let $E=R\cup B$ such that $(S,R)$ and $(S,B)$
are two triangulations of $S$. Denote the faces of the purple graph $(S,P)$ by $\face_1,\ldots , \face_k$.

Suppose that there is an edge $e\in E$ colorblind flippable with respect to $R$ and $B$. We show that $G$ cannot be a maximal biplane graph. Clearly, if $e$ is flippable in $R$ or $B$, then we can flip it in one triangulation and retain it in the other, thereby increasing the total number of edges by one. Otherwise, let $\face_i$ and $\face_j$ be the faces of the purple graph $(S,P)$ adjacent to $e$, and suppose without loss of generality that $e$ is adjacent to a red triangle in $R_{\face_i}$ and a blue triangle in $B_{\face_j}$ that form a convex quadrilateral. Then we can obtain a new decomposition $E=R'\cup B'$ with $R'=(R\setminus R_{\face_i})\cup B_{\face_i}$ and $B'=(B\setminus B_{\face_i})\cup R_{\face_i}$. By flipping $e$ in the triangulation $R'$, and retaining it in $B'$, the total number of edges increases by one.

Suppose now that $G=(S,E)$ is not a maximal biplane graph. We show that there is a colorblind flippable edge in $E$ (with respect to $R$ and $B$).
Recall that $R=R_{\face _1}\cup \ldots \cup R_{\face _k}\cup P$ and $B=B_{\face _1}\cup \ldots \cup B_{\face _k}\cup P$, where $P$ is the set of purple edges. Since $G$ is not maximal, $G$ can be augmented to a larger biplane graph $G_{\max}=(S,E_{\max})$ such that $E\subset E_{\max}$. Let $(S,R_{\max})$, $(S,B_{\max})$ be two triangulations such that $E_{\max}= R_{\max} \cup B_{\max}$. Note that, even though $E_{\max}$ contains $E$, it is possible that $R\not\subseteq R_{\max}$ and $B\not\subseteq B_{\max}$.

Since there is only one way of partitioning $R_{\face_i}\cup B_{\face _i}$ into two plane graphs, each $R_{\face_i}$ and $B_{\face _i}$ must be completely contained in either $R_{\max}$ or $B_{\max}$. That is, we have $R_{\max}=C_1\cup\ldots \cup C_k\cup R_{k+1}$ and $B_{\max}=\overline{C}_1\cup\ldots\cup\overline{C}_k\cup B_{k+1}$, where $\{C_i,\overline{C}_i\}= \{R_{\face _i},B_{\face _i}\}$, and $R_{k+1}$ and $B_{k+1}$ are two additional sets that complete $R_{\max}$ and $B_{\max}$ to a triangulation, respectively.

Consider now the triangulations $R'=C_1\cup \ldots\cup C_k\cup P$, and $B'=\overline{C}_1\cup \ldots\cup \overline{C}_k\cup P$. By construction, we have $E=R'\cup B'$, hence this is another decomposition of $E$ into two triangulations. Note that $R'$ and $R_{\max}$ share all edges in $C_1\cup\ldots\cup C_k$.

Consider all triangulations on $S$ that contain the edges $C_1\cup\ldots\cup C_k$ (i.e., these edges are \emph{constrained}), including $R'$ and $R_{\max}$. It is known~\cite{LL86} that between any two constrained triangulations on the same point set, there is a sequence of edge flips (of unconstrained edges) that transform one into the other. In particular, there is a sequence of flips that transforms $R'$ into $R_{\max}$, flipping only \emph{unconstrained} edges. The first edge flipped in the sequence is in $P$, implying that $P$ contains at least one flippable edge with respect to $R'$. With respect to the original decomposition $E=R\cup B$, this edge is colorblind flippable with respect to $R$ and $B$, as required.
\end{proof}
\begin{corollary}\label{cor:purple}
Let $G=(S,E)$ be a biplane graph such that $E$ can be decomposed into two triangulations. Let $(R,B)$ and $(R',B')$ be any two such decompositions (i.e., $E=R\cup B=R'\cup B'$). A purple edge is colorblind flippable with respect to $R$ and $B$ if and only if it is colorblind flippable with respect to $R'$ and $B'$.
\end{corollary}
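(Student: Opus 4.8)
The plan is to reduce colorblind flippability of a single edge to an intrinsic, decomposition-free property. First I would record the structural fact, already implicit in the proof of Theorem~\ref{theo:colorblind}, that the decompositions of $E$ into two triangulations are in bijection with sign assignments to the faces of the purple graph. Indeed, the purple set $P=R\cap B=R'\cap B'$, and hence the faces $\face_1,\dots,\face_k$ of $(S,P)$, are independent of the chosen decomposition (an edge of $G$ is purple if and only if it is crossed by no edge of $G$). By Lemma~\ref{lem:colorchange}, each $R_{\face_i}\cup B_{\face_i}$ admits a unique bipartition into two plane graphs, so every decomposition has the form $R''=\bigcup_i C_i\cup P$, $B''=\bigcup_i\overline{C}_i\cup P$ with $\{C_i,\overline{C}_i\}=\{R_{\face_i},B_{\face_i}\}$, and conversely every such choice of one class per face yields a valid decomposition. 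Thus $(R,B)$ and $(R',B')$ generate one and the same family of $2^k$ decompositions.

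The key step is the claim that a purple edge $e\notin\ch(S)$ is colorblind flippable with respect to $(R,B)$ if and only if there exists a decomposition $E=R''\cup B''$ into two triangulations in which $e$ is an ordinary flippable edge (of $T_{R''}$ or of $T_{B''}$). To prove it I would localize flippability: the two triangles incident to $e$ lie in the one or two faces of $(S,P)$ bordering $e$, so whether $e$ is flippable depends only on the sign(s) of those faces, which by the previous paragraph may be set freely and independently. If $e$ borders two distinct faces $\face_i\neq\face_j$, then the triangle on each side of $e$ ranges over the two triangles of that face's unique bipartition, giving four possible adjacent-triangle pairs; the first clause of the definition (flippable in $T_R$ or $T_B$) realizes the two same-label pairs, and the second clause (a red triangle in $T_R$ together with a blue triangle in $T_B$) realizes the two mixed-label pairs. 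Hence colorblind flippability with respect to $(R,B)$ is exactly the statement that one of these four quadrilaterals is convex, which is equivalent to $e$ being flippable in some decomposition. If instead $e$ borders a single face on both sides, only that face's sign is free, so just two adjacent-triangle pairs occur (both red or both blue); these are covered by the first clause, while the second clause is vacuous since it requires two distinct faces, again matching.

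With the claim in hand the corollary is immediate: its right-hand side mentions only $E$, $S$, and the common family of decompositions, never the labels $(R,B)$ versus $(R',B')$. Hence the property ``$e$ is flippable in some decomposition'' is the same regardless of which decomposition we started from, so $e$ is colorblind flippable with respect to $(R,B)$ if and only if it is colorblind flippable with respect to $(R',B')$. Convex-hull edges are colorblind flippable under neither decomposition by definition, so they need no separate argument.

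I expect the bookkeeping in the middle paragraph to be the main obstacle: one must verify cleanly that flippability of $e$ truly localizes to its incident faces, that for two distinct faces the two signs are genuinely independent (which is exactly what the decomposition-classification provides), and that the two clauses of the definition, taken together, enumerate precisely the adjacent-triangle pairs realizable by legal sign choices. In particular, the dichotomy between the two-distinct-faces case and the single-face case must be handled with care, including the point that the ``two different faces'' hypothesis in the second clause is exactly what keeps the single-face case tied to the first clause alone.
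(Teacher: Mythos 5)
Your proof is correct and takes essentially the same route as the paper: the paper states Corollary~\ref{cor:purple} without an explicit proof, as a consequence of the machinery developed for Theorem~\ref{theo:colorblind}---namely the classification, via Lemma~\ref{lem:colorchange} and the uniqueness of each face's bipartition, of all decompositions as per-face color swaps of a fixed one, under which the two clauses of the colorblind-flippability definition are invariant. Your intermediate characterization (colorblind flippable if and only if ordinarily flippable in \emph{some} decomposition of the common $2^k$-element family) is precisely the decomposition-free reformulation that this machinery supports, so your write-up simply makes explicit what the paper leaves implicit.
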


Before presenting our algorithm, we show that when we flip one edge we cannot alter the colorblind flippability of any other edges that are  ``far'' away.
\begin{lemma}\label{lem_flips}
The flip of an edge $e$ can only change the colorblind flippability of edges in the triangles that contain $e$.
\end{lemma}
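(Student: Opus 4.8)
The plan is to show that the colorblind flippability of a purple edge is a \emph{local} property, depending only on a bounded neighborhood of the edge, and that a flip of $e$ disturbs this neighborhood only for edges sharing a triangle with $e$. Concretely, I would first argue that whether a purple edge $g$ is colorblind flippable is determined by three pieces of data: the two triangles of $T_R$ incident to $g$, the two triangles of $T_B$ incident to $g$, and whether the two faces of $(S,P)$ bordering $g$ are distinct. Using the convexity reformulation implicit in the proof of Theorem~\ref{theo:colorblind}, $g$ is colorblind flippable exactly when, among the four apex choices across $g$ (a red and a blue apex on each side), some pair spans a convex quadrilateral, with the two \emph{mixed} red--blue pairs admissible precisely when the two bordering faces differ. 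All three data are read off from the immediate surroundings of $g$.

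Next I would reduce the flip to a standard triangulation flip. By Corollary~\ref{cor:purple}, recoloring an entire face of $(S,P)$ changes no purple edge's colorblind flippability, so I may assume $e$ is flippable in $T_R$; the flip then replaces $e$ by the opposite diagonal $f$ of the convex quadrilateral $Q$ formed by the two $T_R$-triangles at $e$, leaving $T_B$ untouched. I would then pin down how the coloring changes: $T_B$ is unaffected, $T_R$ is re-triangulated only inside $Q$, and the purple set shrinks by exactly $e$, i.e. the new purple set is $P\setminus\{e\}$. For the last point I would use the intrinsic characterization (a graph edge is purple iff no other edge crosses it): since $f$ lies inside $Q$ it crosses no purple edge, so every purple edge other than $e$ stays uncrossed and hence purple, while $f$ itself crosses $e$ and is therefore not purple.

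With this in hand, the two triangle data are immediate for any edge $g$ that shares no triangle with $e$: its $T_B$-triangles are unchanged because $T_B$ is, and its $T_R$-triangles are unchanged because only $Q$ was re-triangulated. The remaining, and genuinely delicate, point is the face-distinctness datum. Deleting $e$ from $(S,P)$ merges only the two faces $\face_a$ and $\face_b$ bordering $e$, and the distinctness of the faces bordering a far-away edge $g$ can change only if $g$ borders \emph{both} $\face_a$ and $\face_b$.

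I expect the main obstacle to be ruling out such a far-away edge, and more precisely showing that every purple edge lying on the common boundary of $\face_a$ and $\face_b$ is an edge of one of the (at most four) triangles of $T_R$ and $T_B$ incident to $e$. A purely topological bound is not enough---two distinct faces can in principle share a long chain of edges---so I would exploit the geometry forced by the flip together with the biplane structure: the edge $e$ being flippable supplies the convex quadrilateral $Q$ straddling both faces, and each of $\face_a$, $\face_b$ must separately admit the edge-disjoint red and blue triangulations $R_{\face_a},B_{\face_a}$ and $R_{\face_b},B_{\face_b}$ guaranteed by Lemma~\ref{lem:colorchange}. I would try to show that these two requirements confine the shared boundary of $\face_a$ and $\face_b$ to the immediate vicinity of $e$ (as happens, for instance, when the two faces meet only along edges incident to a common vertex, all of which lie in a triangle at $e$), so that any edge whose face-distinctness is altered is already among the edges counted by the lemma. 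Establishing this confinement rigorously is the crux on which the whole argument rests.
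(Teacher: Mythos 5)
Your proposal is honest about where it stands: everything is reduced, correctly, to the final ``confinement'' claim that any purple edge on the common boundary of the two faces $\face_a,\face_b$ merged by the flip must lie in one of the (at most four) triangles containing $e$. The gap is that this claim is not just unproven but false, so the route cannot be completed. Consider a theta-shaped purple graph: a convex hexagon $u,t_1,t_2,w,b_2,b_1$ with one interior point $m$, whose purple edges are the hull together with the path $um$, $mw$; the two bounded purple faces $\face_a$ (above the path) and $\face_b$ (below it) then share the two edges $e=um$ and $g=mw$. Take, for instance, $u=(0,0)$, $t_1=(2,5)$, $t_2=(8,5)$, $w=(10,0)$, $b_2=(8,-5)$, $b_1=(1,-5)$, $m=(5,0.5)$, with $R_{\face_a}=\{t_1m,t_1w\}$, $B_{\face_a}=\{ut_2,t_2m\}$, $R_{\face_b}=\{ub_2,mb_2\}$, $B_{\face_b}=\{mb_1,wb_1\}$. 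One checks that $e$ is flippable in $T_R$ (the diagonal $t_1b_2$ crosses $um$); that $g$ is colorblind flippable only through the mixed condition (the pair $t_2mw$, $mwb_2$ is convex because $t_2b_2$ crosses $mw$ at $(8,0.2)$, while the diagonals $t_1b_2$ and $t_2b_1$ of the two monochromatic pairs meet the line of $mw$ at $x<5$, outside the segment); and that none of the four triangles containing $e$ (namely $ut_1m$, $umb_2$, $ut_2m$, $umb_1$) contains $g$. Flipping $e$ merges $\face_a$ and $\face_b$, the mixed condition for $g$ becomes unavailable, its four adjacent triangles are untouched, and $g$ ceases to be colorblind flippable: the flippability of an edge outside every triangle containing $e$ has changed.

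What survives is the one-directional statement, and that is all Theorem~\ref{lem:max} actually needs. Your own reductions already prove it: after the flip the purple set is exactly $P\setminus\{e\}$, the four triangles of any edge outside the triangles at $e$ are unchanged, and for such an edge the face-distinctness datum can only degrade (two bordering faces may become equal, never the reverse); hence a far edge can \emph{lose} colorblind flippability but can never \emph{gain} it. Lost flippability is harmless to the algorithm, because flippability is re-tested when an edge is popped from the queue; only edges that may have gained flippability---necessarily among those in the triangles containing $e$---must be re-inserted. It is worth saying plainly that the paper's own proof does not meet the standard of care you applied: it asserts that colorblind flippability ``only depends on the up to four triangles adjacent to the edge,'' silently discarding the face-distinctness clause of the definition, which is exactly the datum you isolated as the crux. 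So your analysis is sharper than the paper's, but the correct resolution is to weaken the lemma to the one-sided version (or to reinterpret ``change'' as ``become colorblind flippable''), not to try to prove the confinement you describe, which the example above rules out.
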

\begin{proof}
The fact that an edge $e'$ is colorblind flippable only depends on the two triangles that are adjacent to $e'$ in $R$ and $B$ (more precisely, on the up to four different combinations of adjacent triangles). Thus, flipping $e$ cannot affect the colorblind flippability of $e'$ if $e'$ is not part of one of the four triangles containing $e$. Moreover, by flipping $e$ we reduce the number of purple edges by one. 
\end{proof}

We now describe an algorithm to augment a given biplane graph $G=(S,E)$ to a maximal biplane graph.

Algorithm MAXIMAL:

\begin{enumerate}
\item Compute (using Lemma 3) a decomposition $E=R\cup B$ such that $(S,R)$ and $(S,B)$ are plane graphs.
\vskip 0.3cm

\item Augment $(S,R)$ to a red triangulation $T_R$, and $(S,B)$ to a blue triangulation $T_B$.
\vskip 0.3cm

\item Find the purple edges $P=T_B\cap T_R$, and compute the faces of the purple plane graph $(S,P)$.
For every purple face, compute the set of red (resp., blue) diagonals and create a standard union-find data structure for these sets of diagonals.

\vskip 0.3cm

\item Put all purple edges in a priority queue $Q$.
\vskip 0.3cm

\item For all $e\in Q$: if $e$ is not colorblind flippable, then do nothing, otherwise insert a flipped counterpart of $e$ as a new edge and update the purple face decomposition as follows. If $e$ is flippable in $T_R$ or $T_B$, then the new edge is part of one triangulation, $e$ remains in the other one, and all other edges keep their original color; if e is flippable in the union of a red and a blue triangle that lie in two different purple faces, then all chords in one of the faces adjacent to $e$ in the purple graph change their color. After this recoloration, $e$ becomes flippable in one of the two triangulations, and the algorithm proceeds as in the previous case. In all cases remove $e$ from $Q$.
\vskip 0.3cm

\item After each flip, reinsert into $Q$ the purple edges that now become colorblind flippable. (By Lemma~\ref{lem_flips}, up to four other purple
edges can be affected by the flip).
\vskip 0.3cm

\item The algorithm ends when $Q$ is empty.
\vskip 0.3cm

\end{enumerate}

\begin{theorem}\label{lem:max}
Given a biplane graph $(S,E) \in \G$ the above algorithm computes a maximal graph $(S,E_{\max}) \in \G$ such that $E\subseteq E_{\max}$ in $O(n \log n)$ time.
\end{theorem}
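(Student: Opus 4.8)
The plan is to separate the argument into a correctness part---showing that the output $(S,E_{\max})$ is a biplane graph with $E\subseteq E_{\max}$ which is maximal---and a running-time part. For correctness I would lean almost entirely on the machinery already in place: Theorem~\ref{theo:colorblind} reduces maximality to the local condition that no purple edge is colorblind flippable, Corollary~\ref{cor:purple} guarantees that this condition is intrinsic (independent of the current red/blue decomposition), and Lemma~\ref{lem_flips} localizes the effect of a flip. The running-time part then becomes a matter of showing that the number of flips is linear and that each step of the loop can be charged to cheap data-structure operations.

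First I would set up the loop invariant that after each iteration $(S,R)$ and $(S,B)$ are triangulations of $S$, the current biplane graph is $R\cup B$, and $E\subseteq R\cup B$. Each operation in Step~5 either flips a purple edge that is genuinely flippable in one triangulation (inserting its crossing counterpart while retaining the original in the other layer) or first performs the face recoloring of Theorem~\ref{theo:colorblind} and then does such a flip; in both cases exactly one new edge is added, no edge is ever removed, and the biplane property is preserved. Hence $E\subseteq E_{\max}$ and $(S,E_{\max})\in\G$. Moreover, since the flipped edge $e$ and its inserted counterpart $f$ cross, $e$ and $f$ both become monochromatic and no new purple edge is created, so the number of purple edges strictly decreases; as the purple graph $(S,P)$ has $O(n)$ edges this simultaneously bounds the number of flips by $O(n)$ and guarantees termination.

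For maximality I would maintain the additional invariant that every purple edge not currently in $Q$ fails to be colorblind flippable. This holds after Step~4, and is preserved because, by Lemma~\ref{lem_flips}, a flip can only change the colorblind flippability of purple edges lying in the at most four triangles incident to the flipped edge (the recoloring that may precede a flip changes nothing by Corollary~\ref{cor:purple}, being merely another valid decomposition of the same graph), and Step~6 reinserts exactly those edges. When the loop ends with $Q$ empty, no purple edge is colorblind flippable with respect to the final decomposition, so by Theorem~\ref{theo:colorblind} the graph $(S,E_{\max})$ is maximal. A bookkeeping point to verify is that total queue work is $O(n\log n)$: the number of insertions is $O(n)$ initially plus four per flip, i.e. $O(n)$ overall, each costing $O(\log n)$.

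For the remaining running time, Steps~1--2 cost $O(n\log n)$ by Lemma~\ref{lem:testing} and standard triangulation augmentation, and Step~3 costs $O(n)$ to intersect the two $O(n)$-edge triangulations and to build the planar subdivision $(S,P)$ together with its faces. The delicate part---and what I expect to be the main obstacle---is implementing the face recoloring in Step~5 within budget: a face may contain $\Theta(n)$ chords and may be recolored many times, so recoloring chord by chord would cost $\Theta(n^2)$. The resolution is to exploit Lemma~\ref{lem:colorchange}: inside each purple face the red and blue chords form a single connected bipartite intersection graph, hence a unique bipartition into two classes. I would therefore store, per chord, only which side of its face's bipartition it lies on, and, per face, a single bit recording which side is currently red; recoloring is then just flipping that bit in $O(1)$ time, and the color of any chord is read by combining its side bit with its face's bit. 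The last implementation hurdle is that flipping a purple edge merges its two incident faces (with $e$ and $f$ becoming chords of the merged face), so the side labels and the color bit must survive a union; I would handle this with a union--find structure carrying a parity offset (a weighted union--find), which keeps the side labels consistent across a merge and answers ``color of chord'' and ``same face?'' queries in $O(\alpha(n))$ amortized time. With $O(n)$ such operations the main loop runs in $O(n\,\alpha(n))$ time, so the whole algorithm is dominated by Steps~1--2 and runs in $O(n\log n)$ overall.
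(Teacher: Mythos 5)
Your proof is correct and follows essentially the same route as the paper: termination via the strictly decreasing purple-edge count, maximality via Theorem~\ref{theo:colorblind} together with the locality guarantee of Lemma~\ref{lem_flips} and the queue-reinsertion invariant, and a union--find structure over the chords of purple faces to absorb the cost of face merges. In fact, your parity-offset union--find makes explicit the recoloring bookkeeping that the paper's proof only sketches (it asserts $O(\log n)$ amortized time per purple edge with a ``standard'' union--find without detailing how chord colors are tracked across recolorings), so your account is, if anything, slightly more complete on that point.
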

\begin{proof}
The algorithm terminates because after each flip the number of purple edges decreases by one. Since it terminates with a graph containing no colorblind flippable edges, by Theorem~\ref{theo:colorblind}, the obtained graph is maximal.
Thus, it remains to show that the algorithm runs in $O(n\log n)$ time. By Lemma~\ref{lem:testing}, it takes $O(n \log n)$ time to produce the initial decomposition $E=R\cup B$. We can complete both layers into triangulations in $O(n\log n)$ time. We assume the triangulations are represented in a data structure allowing constant-time navigation between edges and adjacent faces (such as a doubly connected edge list). Classification of the edges into red, blue, and purple, as well as creating the face-decomposition of the purple graph, can be done in $O(n\log n)$ time. For each purple edge, we store its two adjacent red triangles and two adjacent blue triangles. Hence, we can check whether a purple edge $e\in P$ is colorblind flippable (with respect to $R$ and $B$) in constant time.

The second phase consists of checking all purple edges and trying to flip them. The algorithm maintains all purple edges in a priority queue. Note that when a purple edge is flipped, its two adjacent faces in the purple graph merge. We maintain the set of red and blue chords of the purple graph in a standard union-find data structure, so we can find which face a chord belongs to and merge two faces in $O(\log n)$ amortized time. In this way, processing each purple edge takes $O(\log n)$ amortized time. In addition to updating the face structure, we must check the flippability of up to four more purple edges each time an edge is added into $G$. We charge this extra cost to the added edge.

Since the number of edges in a biplane graph is bounded by $6n-18$, by Lemma~\ref{lem:tri}, the number of edges we will check is also bounded by $O(n)$. That is, after an $O(n\log n)$-time preprocessing, our algorithm will check the flippability of $O(n)$ edges. Each purple edge can be processed in $O(\log n)$ amortized time. Thus, we conclude that the algorithm runs in $O(n\log n)$ total time.
\end{proof}

\paragraph{Remark.}
The algorithm in Theorem~\ref{lem:max} augments a biplane graph $(S,E)$ drawn on a point set $S$ into a maximal biplane graph adding edges one-by-one. If we need an arbitrary maximal biplane graph on $S$, then we can start with the empty graph $(S,\emptyset)$; if we would like to generate another maximal biplane graph, it suffices to execute the algorithm again where the initial graph consists of a single edge not present in the previously obtained maximal graph. Finally, observe that this procedure will construct maximal graphs, but the resulting graph need not be maximum. Thus, it remains an open problem to efficiently compute a \emph{maximum} biplane graph on a given point set $S$.


\section{Connectivity of Maximal Biplane Graphs}\label{sec:extremal}

In this section we consider the following question. What is the maximum possible connectivity of a graph in $\G$ over all $n$-point sets $S$? In other words, this section studies the problem of finding the value
$$\kappa_{2}(n)=\max_{|S|=n} \max_{G\in\G} \kappa(G).$$

If the points in $S$ are in convex position, then every graph in $\G$ is planar (by Lemma 1(i) of ~\cite{GHKMSSTT13-II}), and thus we cannot construct a 6-connected biplane graph. However, biplane graphs may achieve higher connectivity for certain sets. As already noted, Hutchinson et al.~\cite{HSV99} proved that every biplane graph in $\G$ has at most $6n-18$ edges for $n\geq 8$. Therefore the sum of the vertex degrees is at most $12n-36$, and there is always a vertex of degree at most 11. Consequently, 11 is an upper bound for vertex-connectivity. In the following we show how to construct a biplane graph with minimum vertex degree $10$, and then we modify this construction to obtain an $11$-connected biplane graph. The construction combines elements of a construction by Huntchinson et al.~\cite{HSV99} with fullerene graphs.

Huntchinson et al.~\cite{HSV99} constructed a biplane graph with $n$ vertices and $6n-20$ edges (for sufficiently large values of $n$). The core of their construction is a set of $k^2$ points placed on a $k\times k$ section of the integer grid with coordinates $(i,j)$ for $1\le i\le k$ and $1\le j\le k$. Essentially, a vertex $(i,j)$ is connected to vertices $(i\pm 1, j)$, $(i,j\pm1)$, $(i\pm1,j+1)$, $(i\pm1,j-1)$, $(i+2,j+1)$, $(i-2,j-1)$, $(i+1,j-2)$, and $(i-1,j+2)$ whenever they exist (see Figure~\ref{pic:10Connected}). We say that a vertex $(i,j)$ is a \emph{boundary vertex} if $i\in \{1,k\}$ or $j\in \{1,k\}$, and an \emph{interior vertex} otherwise. The graph is the union of two lattice triangulations, as shown in Figure~\ref{pic:10Connected}. Observe that all interior vertices have degree 10 or higher, but the boundary vertices have lower degree: the degree is 4 at the four corners, 6 at four neighbors of the corners and 7 at all other boundary vertices.

Fullerenes~\cite{BGK12} are planar 3-regular graphs with only hexagonal faces, except for precisely 12 faces that  are pentagonal. It is known that a fullerene of $2k$ vertices can be constructed for $k=10$ or $k\geq 12$. Moreover, there exist fullerenes in which pentagonal faces are sufficiently far from each other: for example, for $k\geq 36$ (and for $k=30$), there exists a fullerene with $2k$ vertices in which there are no two adjacent pentagonal faces~\cite{BGK12}.
 The dual of such a fullerene is a planar graph with $k+2$ vertices and triangular faces, where all vertices have
degree $6$ except for twelve vertices of degree $5$ (which are pairwise nonadjacent if no pentagonal faces are adjacent). Since every fullerene is cyclically 5-edge-connected~\cite{Doslic}, its dual graph is 5-connected~\cite{Bar74}. Thus, by Lemma~1(ii) in~\cite{GHKMSSTT13-II}, the dual graph can be represented as a biplane graph drawn on points in convex position.

We modify the $k\times k$ grid construction of Huntchinson et al.~\cite{HSV99} as follows. Slightly deform the bounding box of the $k \times k$ integer grid in a way that each side becomes a reflex curve. Note that these curves need to be sufficiently flat to maintain the intersection pattern of the nonboundary edges (see Figure~\ref{pic:10Connected}). Attach a $5$-connected biplane graph from the class described above (dual graph of a fullerene in which there are no two adjacent pentagonal faces) to each side of the grid.
Align the 5-connected biplane graphs along each side of the grid such that a vertex of degree 6 in the 5-connected biplane graph is identified with a vertex of degree 6 in the grid (marked with empty dots in Figure~\ref{pic:10Connected}). Denote by $G_{10}(k)$ the resulting graph: it has $k^2$ vertices, and is clearly biplane. Moreover, by construction, all vertices of $G_{10}(k)$ have degree at least $10$.

\begin{figure}[tb]
\centering
  \includegraphics[scale=1]{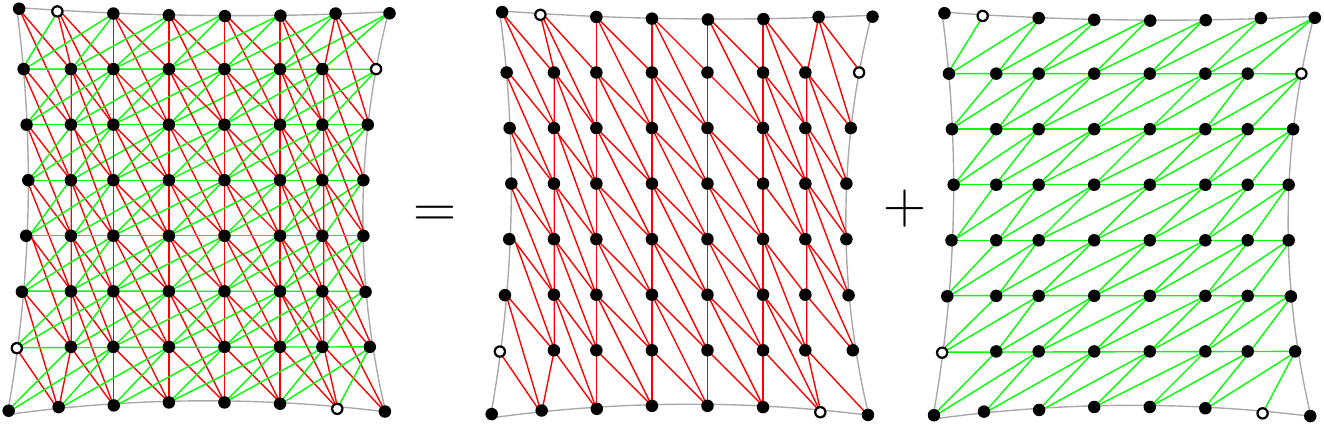}
\caption{The $10$-connected biplane graph formed by two planar triangulations. Empty dots correspond to the vertices of degree 6 in the grid. Each one must also have degree 6 in the 5-connected graph attached to its side.}
\label{pic:10Connected}
\end{figure}

\begin{proposition}\label{prop_10}
For any $k\geq 40$ it holds that $\kappa(G_{10}(k))=10$.
\end{proposition}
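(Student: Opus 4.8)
The plan is to prove the two inequalities $\kappa(G_{10}(k)) \le 10$ and $\kappa(G_{10}(k)) \ge 10$ separately. The upper bound is immediate: by construction the minimum degree of $G_{10}(k)$ is exactly $10$, so since $\kappa(H)\le \delta(H)$ for every noncomplete graph $H$ (and $G_{10}(k)$ has $k^2\ge 1600$ vertices), we obtain $\kappa(G_{10}(k)) \le 10$. All the substance lies in the lower bound, which I would prove by showing that deleting any vertex set $A$ with $|A|\le 9$ leaves $G_{10}(k)-A$ connected. Fixing such an $A$, the strategy is to exhibit a large ``core'' that survives connected and then attach every remaining vertex to it.

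For the core, let $I$ be the set of interior grid vertices lying at lattice distance at least $3$ from all four sides. Each vertex of $I$ has its full $12$-element neighborhood inside the grid, and the translation-invariant lattice adjacency makes nearby neighborhoods overlap heavily. A key lemma to prove here is that the underlying infinite lattice triangulation, being vertex-transitive and $12$-regular, is highly connected (the $\kappa=\delta$ phenomenon for such lattices), so that between any two vertices of $I$ there are at least $10 > |A|$ internally disjoint grid paths. I would then argue that, because $k\ge 40$ makes $I$ both large and wide, these paths can always be rerouted around $A$, so the surviving vertices of $I$ lie in a single component of $G_{10}(k)-A$.

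It then remains to connect every surviving vertex outside the core --- the boundary grid vertices and the vertices of the four attached fullerene-dual graphs --- to this core. Here I would use two independent layers of connectivity along each side: the grid edges, which already supply several internally disjoint routes from a boundary strip into $I$, and the $5$-connectivity of the dual graph attached to that side, which spans the whole side and is identified with the grid at many shared degree-$6$ vertices. Since these two layers are essentially independent, severing an entire side from the core would require cutting both of them simultaneously, which needs more than $9$ vertices. The corners, where two adjacent dual graphs meet the low-degree grid corner, require a dedicated local argument: I would verify that the union of the two incident dual graphs with the surrounding grid neighborhood is itself $10$-connected, so that no set of size $9$ can isolate a corner region.

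The main obstacle is precisely this boundary-and-corner analysis. A single fullerene-dual graph is only $5$-connected, so $9$ deletions could split it in isolation; the proof must therefore combine the grid's local connectivity with the dual's $5$-connectivity exactly at the seams, and confirm that adjacent reinforcements overlap sufficiently near the corners. The hypothesis $k\ge 40$ is what makes this feasible: it guarantees that the dual graphs (which need at least $30$ vertices for their pentagonal faces to be pairwise nonadjacent) fit along each side, that the four corner regions are pairwise disjoint, and that the interior core $I$ is wide enough for the rerouting argument. Once these local $10$-connectivity facts are secured, any surviving noncore vertex chains into the core, and core connectivity yields that $G_{10}(k)-A$ is connected, establishing $\kappa(G_{10}(k))\ge 10$ and hence equality.
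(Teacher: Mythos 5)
Your upper bound is fine and matches the paper. The lower bound, however, has two genuine problems, one of model and one of substance. The model problem: in the paper's construction the fullerene-dual graphs contribute \emph{no new vertices}. The sides of the $k\times k$ grid are deformed into reflex curves precisely so that the boundary vertices of each side are in convex position, and the $5$-connected dual graph is then drawn on those boundary vertices themselves; $G_{10}(k)$ has exactly $k^2$ vertices. Your plan instead treats ``the boundary grid vertices and the vertices of the four attached fullerene-dual graphs'' as distinct objects, glued together ``at many shared degree-$6$ vertices.'' Under that reading the non-shared vertices of a dual graph would have degree $5$ or $6$, the minimum degree of $G_{10}(k)$ would not be $10$, and the proposition would be false; so the graph you analyse in the boundary step is not the paper's graph, and is inconsistent with the minimum-degree claim you use for the upper bound.

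The substantive problem is that the seam analysis---exactly where the difficulty of the lower bound lives---is asserted rather than proved. The danger is not ``severing an entire side from the core'' but cutting off a \emph{small patch} of boundary vertices (two to five of them, near a degree-$10$ vertex or a corner). For such patches one must count distinct outside neighbors or internally disjoint paths, and ``$5$-connectivity of the dual plus several grid routes'' does not add up to more than $9$ without a careful analysis of how the two neighborhoods overlap; your corner case is likewise deferred (``I would verify\dots''), and the core-rerouting lemma for the interior lattice is also left unproven. The paper avoids all of this with a different mechanism: it supposes a cut $C$ with $|C|\le 9$, takes the small side $A$ of the separation minimal, and observes that then no vertex of $C$ can have a \emph{unique} neighbor in $A$ (otherwise swapping that neighbor into $C$ would shrink $A$). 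Combined with pigeonhole facts---$A$ meets at most $9$ rows and $9$ columns, and (using $k\ge 40$) some column $x=i_0$ meeting $A$ has four consecutive $A$-free columns beside it---the grid's diagonal adjacencies force a propagation $(i_0+\ell,\, j_0+3\ell)\in A$ for all $\ell$, so $A$ meets roughly $k/3$ rows, a contradiction. That argument is purely combinatorial on grid coordinates, needs no separate treatment of boundary, corners, or dual edges, and is where the hypothesis $k\ge 40$ actually enters. So even granting your core lemma, the proposal does not close the boundary case, which is the heart of the matter.
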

\begin{proof}
We refer to the $k^2$ points of $G_{10}(k)=(S,E)$ by their coordinates $(i,j)$ (for $1\le i,j\le k$) in the grid. For a vertex set $U\subset S$, denote by $N(U)$ the set of neighbors of vertices in $U$. Since $G_{10}(k)$ has vertices with degree $10$, we clearly have $\kappa(G_{10}(k))\leq 10$.
We now argue by contradiction that $\kappa(G_{10}(k))> 9$ holds.

Suppose that there is a vertex cut of size at most $9$. Then there is a vertex partition $S=A\cupdot B\cupdot C$ such that $|C|\leq 9$, $A,B\neq \emptyset$, and there is no edge between $A$ and $B$. Note that $N(A)\subseteq A\cup C$ and $N(B)\subseteq B\cup C$. We may assume that $|A|$ is minimal among all such vertex partitions, and by symmetry $|A|\leq n/2=k^2/2$. Since the degree of each vertex is 10 or higher, we have $|A|>1$. We observe the following properties of the set $A\subset S$:

\begin{enumerate}

\item\label{en_noline} Set $A$ cannot contain all the vertices on the vertical line $x=i$ for any $1\leq i\leq k$. Suppose, to the contrary, that $(i,j)\in A$ for all $1\leq j\leq k$. Then for each $j$, either all the vertices of the line $y=j$ are in $A$ or there is a point of $C$ in that line. Since $|C|\le 9$, the latter can happen at most 9 times. Since $k\geq 18$, $A$ contains more than half of the points of $G$ contradicting $|A|<n/2$. By symmetry, $A$ cannot contain all the vertices on the line $y=j$ for any $1\leq j\leq k$.

\item\label{en_9lines} Set $A$ cannot contain vertices in more than 9 rows; otherwise, since $|C|\leq 9$, it should completely contain at least one of these lines. Naturally, the same result is true for columns.

\item\label{en_linei0} By the pigeonhole principle (and the fact that $k\geq 40$), there exists an index $i_0$ such that the line $x=i_0$ contains a vertex of $A$, and either the four previous lines or the four following lines contain no vertices of $A$ (that is, either lines $x=i_0-1,x=i_0-2,x=i_0-3,x=i_0-4$ or lines $x=i_0+1,x=i_0+2,x=i_0+3,x=i_0+4$ are empty of vertices from $A$). Without loss of generality, we assume that the first case holds.
\end{enumerate}

Let $p\in A$ be a vertex adjacent to some $q\in C$. If $q$ is not adjacent to any vertex of $A\setminus \{p\}$, we say that $p$ is a \emph{unique neighbor} of $q$ (in $A$). Consider the case in which a point $q$ has a unique neighbor $p_0\in A$. In this case, we define $A'=A\setminus \{p_0\}$. Notice that $C'=(C\setminus \{q\})\cup \{p_0\}$ is a cut set of size $|C'|= |C|$ that splits $G_{10}(k)$ into subgraphs induced by $A'$ and $B'=B\cup \{q\}$, contradicting the minimality of $A$.

Thus, we conclude that no point in $C$ has a unique neighbor in $A$. Recall that, by property~\ref{en_linei0}, there is no point of $A$ in lines $x=i_0-1$, \ldots , $x=i_0-4$. This implies that there cannot be a point of $A$ in position $(i_0,j)$ for any $j\geq 3$, since otherwise point $(i_0,j)$ would be a unique neighbor of $(i_0-2,j-1)$. So the points of $A$ on the line $x=i_0$ can only be placed in positions $(i_0,1)$ or $(i_0,2)$.

Let $j_0\in \{1,2\}$ be the maximum index such that $(i_0,j_0)\in A$. Then, $A$ must also contain point $(i_0+1,j_0+3)$, otherwise $(i_0,j_0)$ would be the unique neighbor of $(i_0-1,j_0+2)$. By repeating the same argument for point $(i_0+1,j_0+3)$, we see that there cannot be any point in $A$ in position $(i_0+1,j)$ for $j> j_0+3$, otherwise point $(i_0+1,j)$ would be the unique neighbor of $(i_0-1,j-1)$; and also $(i_0+2, j_0+6)\in A$, otherwise $(i_0+1,6)$ would be the unique neighbor of $(i_0,j_0+5)$.  By repeating this argument, we conclude that the points of the form $(i_0+\ell,j_0+3\ell)$ must be in $A$ for all $\ell=0,\ldots , \lfloor k/3\rfloor-1$ (in particular, $i_0\leq \lceil 2k/3\rceil+1$). This contradicts property~\ref{en_9lines}, which states that no more than 9 lines contain points of $A$.
\end{proof}

We emphasize that $G_{10}(k)$ cannot be $11$-connected because it has some 
vertices of degree $10$. Thus, by removing the neighbors of any such vertex, we disconnect $G_{10}(k)$. In the following, we make some local flips around the vertices of degree 10
to increase the minimum vertex degree (and connectivity) to $11$.
For this purpose, we first characterize the vertices that have degree 10 in $G_{10}(k)$. Consider first the vertices along the boundary of the grid and recall that each boundary side of the $k\times k$ grid spans a fullerene. In particular, all but twelve vertices of $G_{10}(k)$ have six neighbors within the same boundary side (regardless of the value of $k$). The remaining twelve vertices will only have 5 neighbors within the boundary side. Note that it is possible to choose where to place these vertices so that, for a sufficiently large $k$, the vertices with 5 neighbors satisfy the following conditions: $(i)$ they are not adjacent in $G_{10}(k)$, $(ii)$ they are sufficiently far apart from any corner (say, with at least 7 grid vertices between one of these vertices and a corner).

The other situation in which a boundary vertex can have degree $10$ is if it has 6 neighbors within the boundary, but only 4 neighbors in the interior of the grid. This only happens to the four vertices at positions $(1,2)$, $(2,k)$, $(k,k-1)$ and $(k-1,1)$. Note that corner vertices have 6 neighbors on each boundary side (and 2 neighbors in the interior of the grid), thus their degree is $14$. That is, regardless of the value of $k\geq 12$, each boundary side will have $12+1=13$ vertices of degree 10.

Finally, we must consider vertices in the interior of the grid that have degree $10$. Note that only those at locations $(2,2)$, $(2,k-1)$, $(k-1,2)$, and $(k-1,k-1)$ will have low degree. It is straightforward to verify that any other vertex in the grid has at least $11$ neighbors (if the vertex is on one of the lines $x=2,x=k-1,y=2,y=k-1$), or $12$ (otherwise). 
Thus, in total we have $4(12+1)+4=56$ vertices of degree $10$. Four of these vertices are located around the grid corners, while the others are spaced along the boundary of the grid.
We increase the degree of these vertices by flipping an edge in an appropriate triangulation.

\begin{figure}[!htb]
\centering
  \includegraphics[width=0.95\textwidth]{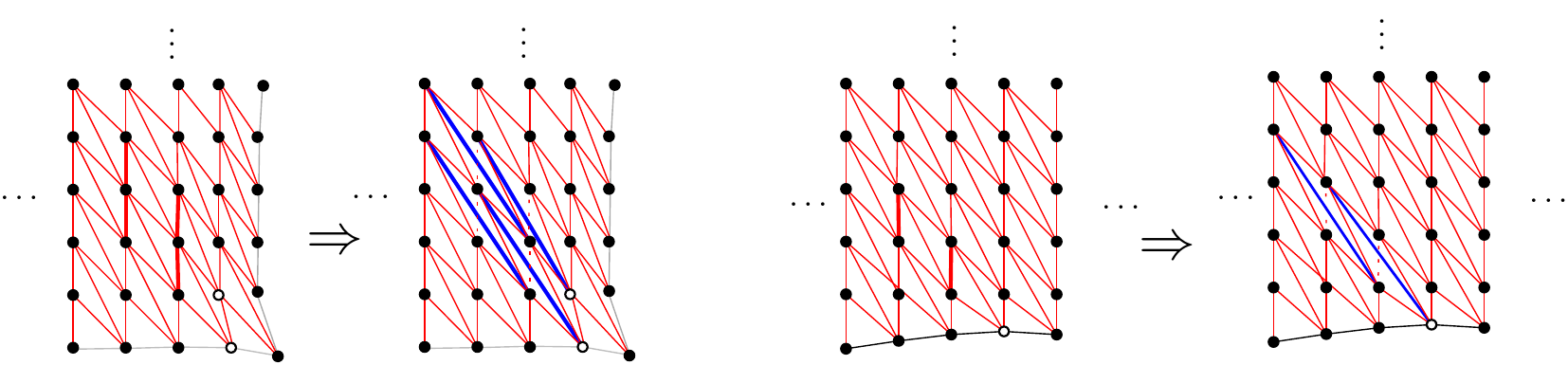}
\caption{To increase the minimum vertex degree around the corner (left) or along the boundary side (right), four and two flips are sufficient, respectively. In both cases, the bold red edges are flipped into the bold blue edges. Empty dots represent the vertices that originally had degree $10$.}\label{pic:11ConnectedA}
\end{figure}

In Figure~\ref{pic:11ConnectedA} (left) we show the changes needed to increase the vertex degree of the vertices at positions $(k-1,1)$ and $(k-1,2)$ (for clarity, only one of the triangulations is shown). Specifically, the four edges $(k-2,2)(k-2,3),(k-2,3)(k-2,4),(k-3,3)(k-3,4),(k-3,4)(k-3,5)$ are replaced by the four edges $(k-1,1)(k-3,4),(k-1,2)(k-3,5),(k-2,2)(k-4,5),(k-2,3)(k-4,6)$. The construction for the other corners is analogous, although the flips might happen in the other triangulation. The case in which the vertex of degree $10$ is along the boundary of the grid can be resolved with only 2 flips (see Figure~\ref{pic:11ConnectedA}, right): the edges $(i-1,2)(i-1,3),(i-2,3)(i-2,4)$ are flipped into the  edges $(i,1)(i-2,4),(i-1,2)(i-3,5)$. Observe that these transformations are local, thus they can be done without affecting each other.

In all, we construct a new graph from $G_{10}(k)$ by replacing $d=2\cdot 56=112$ edges (all of which are parallel to the coordinate axes) with $d$ new edges. Denote by $G'_{10}(k)$ the intermediate graph obtained after deleting these $d$ edges, and by $G_{11}(k)$ the final graph after adding the $d$ new edges. Observe that all vertices have degree $10$ or higher in $G'_{10}(k)$ (and hence in $G_{11}(k)$). For every vertex set $U\subset S$, let $N'(U)$ denote the set of neighbors of $U$ in $G'_{10}(k)$.

By construction the minimum vertex degree is 11 in $G_{11}(S)$, so at least 11 vertices must be  deleted to separate a single vertex from the rest of the graph. We now show that, for separating any larger vertex set $A\subset S$, $2\leq |A|\leq |S|/2$, at least 11 vertices must be deleted in the subgraph $G'_{10}(k)$ of $G_{11}(S)$, if $k$ is sufficiently large. We start with the case $|A|=2$.

\begin{proposition}\label{prop:nodiscon2}
For every vertex set $A\subset S$ of size $|A|=2$, we have $|N'(A)\setminus A|\ge 11$.
\end{proposition}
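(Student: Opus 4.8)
The plan is to bound $|N'(A)\setminus A|$ from below by combining the minimum-degree bound in $G'_{10}(k)$ with an upper bound on the number of common neighbours of the two vertices of $A$. Write $A=\{u,v\}$ and let $c=|N'(u)\cap N'(v)|$ be the number of common neighbours in $G'_{10}(k)$. Since a vertex is never its own neighbour, inclusion--exclusion gives
\[
|N'(A)\setminus A| = |N'(u)| + |N'(v)| - c - |\{u,v\}\cap N'(A)|,
\]
where the last term equals $2$ when $u$ and $v$ are adjacent and $0$ otherwise. As every vertex has degree at least $10$ in $G'_{10}(k)$, this reduces the proposition to two inequalities: it suffices to show $c\le 7$ when $u$ and $v$ are adjacent, and $c\le 9$ when they are not. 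Both thresholds lie comfortably above what the grid structure produces, so the whole argument becomes an upper bound on common neighbours.

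Next I would dispose of the generic case. Let $V\subset\mathbb{Z}^2$ be the set of the twelve neighbour offsets listed in the construction, and write $V+\delta=\{w+\delta:w\in V\}$. If both $u$ and $v$ lie off the boundary lines $x\in\{1,k\}$ and $y\in\{1,k\}$, then all their incident edges are grid edges, and for displacement $\delta=v-u\neq 0$ the number of common neighbours is at most $|V\cap(V+\delta)|$, with equality deep in the interior. A finite check over all differences of pairs of offsets shows $|V\cap(V+\delta)|\le 6$ for every $\delta\neq 0$, the maximum $6$ being attained precisely at the axis displacements $\delta\in\{(\pm1,0),(0,\pm1)\}$ (which are adjacencies); the diagonal adjacencies give $4$ and the ``long'' adjacencies give $2$. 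Hence $c\le 6$ for any such pair, well within both thresholds. Moreover a deep interior vertex has degree $12$, so the inequality has ample slack there, and the genuinely tight configurations can only involve the degree-$10$ vertices, all of which sit near the boundary.

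The main work, and the main obstacle, is therefore the boundary together with the attached fullerene-dual graphs: the remaining pairs are exactly those with at least one vertex on a boundary line, which may then carry both grid and fullerene edges. Two observations keep this under control. First, each attached graph is a planar triangulation of maximum degree $6$, so two vertices adjacent inside it share exactly $2$ common neighbours and any single vertex contributes at most $6$ neighbours. Second, because $k$ is large, the low-degree vertices on each side are pairwise non-adjacent and far from the corners, so the $56$ degree-$10$ vertices interact only through a bounded, $k$-independent interface. I would then enumerate the finitely many local configurations of an adjacent or non-adjacent pair meeting this interface---two boundary grid vertices, a boundary vertex together with an off-boundary neighbour such as $(2,2)$, and pairs straddling a grid--fullerene identification---and verify in each case that the grid-common and fullerene-common neighbours together never exceed $7$ for adjacent pairs or $9$ for non-adjacent pairs, while the degrees remain at least $10$. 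The delicate bookkeeping is precisely at the corners and at the identified degree-$6$ vertices, where grid edges and fullerene edges both feed into $c$; every other pair is dominated by the interior estimate $c\le 6$.
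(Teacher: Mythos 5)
Your opening reduction is sound, and it is in fact the paper's own argument in disguise: with minimum degree $10$ in $G'_{10}(k)$, the inequality $|N'(A)\setminus A|\ge 11$ is equivalent to bounding the number $c$ of common neighbours by $7$ for adjacent pairs and by $9$ for non-adjacent pairs, which is exactly the paper's reformulation that one vertex must have at least two neighbours outside the closed neighbourhood of the other. Your interior case is also correct: the finite check $|V\cap(V+\delta)|\le 6$ holds, with the maximum at the axis offsets. But this is the easy part of the proposition. Every vertex of degree $10$ --- the only vertices for which the bound is tight --- lies on or immediately next to the grid boundary, where the fullerene-dual graphs are attached, and there your proposal stops being a proof: ``I would then enumerate the finitely many local configurations \ldots and verify in each case'' is a plan, not an argument. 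The thresholds $c\le 7$ and $c\le 9$ in that regime are asserted, never established, and that verification is precisely the content of the proposition.

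Moreover, the missing verification does not follow from the facts you cite. First, you bound common neighbours inside an attached dual only for \emph{adjacent} pairs (``exactly $2$''); you give no bound for non-adjacent pairs of dual vertices, and a planar triangulation of maximum degree $6$ can have non-adjacent pairs sharing three or more common neighbours. Second, the dual is drawn on the deformed side by Lemma~1(ii) of~\cite{GHKMSSTT13-II}, and nothing in the construction tells you which points in the convex order are adjacent in that drawing; a pair of consecutive side points is grid-adjacent (so it needs $c\le 7$) and contributes $3$ grid-common neighbours already, so if it happened to be non-adjacent in the dual while sharing $5$ dual-common neighbours, your threshold would fail --- you have no lemma ruling this out. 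Third, the fullerene is not a fixed graph (only the existence of a suitable one is invoked), so ``finitely many local configurations'' cannot be settled by enumeration; it requires structural lemmas valid for every admissible dual and every admissible convex drawing of it. The paper avoids all of this by a different route: it first dismisses pairs containing a vertex of degree at least $12$ and pairs with $|x_1-x_2|>4$ or $|y_1-y_2|>4$, and for the remaining nearby pairs it runs five positional cases, in each one exhibiting two explicit grid neighbours of one vertex, chosen with strictly larger $y$-coordinate than both vertices, so that they cannot lie in the closed neighbourhood of the other; this sidesteps any bound on common neighbours through the attached graphs. You would need to either carry out your boundary analysis with genuine lemmas about the duals, or switch to witness-type arguments of this kind.
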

\begin{proof}
Let $A=\{v_1,v_2\}$ with $v_i=(x_i,y_i)$ for $i=1,2$. Observe that if either $v_1$ or $v_2$ has degree at least $12$, the statement holds. Likewise, if $|x_1-x_2|>4$ or $|y_1-y_2|>4$, we have that $N'(v_1)$ and $N'(v_2)$ do not have points from the interior of the grid in common, and thus the claim follows.

Thus, it remains to consider the case when $|x_1-x_2|\le 4$, $|y_1-y_2|\le 4$, and $|N(v_i)|\le 11$ for $i=1,2$. Since the degree of $v_i$ in $G'_{10}(k)$ is at least 10, we only need to prove that $v_2$ has at least two neighbors outside of $N'(v_1)$ other than $v_1$ itself (or the equivalent statement for $v_1$). By rotating the grid if necessary, we can assume $1\le y_1<y_2<k-1$.
Distinguish between the following five cases:

\begin{enumerate}
  \item If $1<x_2<k$, then the three vertices $(x_2-1,y_2+1),(x_2-1,y_2+2),(x_2+1,y_2+1)\in N'(v_2)$. Observe that none of these vertices can be $v_1$ (since we assumed that $y_1<y_2$). Moreover, at most one of them can be in $N'(v_1)$ (since each point of the grid has at most one adjacency with other vertices whose difference in the $y$ coordinates is at least 2). Thus, we conclude that $N'(v_2) \setminus N'(v_1)$ contains at least two vertices.
  \item If $x_2=1$, then the three vertices $(x_2,y_2+1),(x_2+1,y_2+1),(x_2+2,y_2+1)\in N'(v_2)$. Using the same reasoning, we conclude that among these three vertices, at most one can be discarded.
\end{enumerate}

So far, we have considered the cases in which $v_2$ is in the interior of the grid, or at the left boundary. Since the grid is not symmetric, the case in which $v_2$ is at the right boundary cannot be treated in a similar way. Instead, we consider three more cases.

\begin{enumerate}\setcounter{enumi}{2}
  \item If $x_2=k$ and $x_1<k$, then $(x_2-1,y_2+1),(x_2-1,y_2+2)$ are in $N'(v_2)$, and neither of them can be in $N'(v_1)\cup \{v_1\}$.
  \item If $x_2=k$, $x_1=k$, and $y_1>1$,  then $(x_1-1,y_1-1),(x_1-2,y_1-1)$ are in $N'(v_1)\setminus N'(v_2)$.
  \item If $x_2=k$ and $v_1 = (k, 1)$, then $v_1$ is a corner vertex, and $v_1$ alone has 14 neighbors.
\end{enumerate}\end{proof}

According to Proposition~\ref{prop:nodiscon2} the above result says that if we want to disconnect exactly two vertices from $G'_{10}(k)$, we must remove at least $11$ vertices. A similar result holds for larger sets as well.

\begin{proposition}\label{prop:nodiscon3}
For any $k>491$, and set $A\subset S$ of size $2\leq |A|\leq n/2$,
we have $|N'(A)\setminus A|\ge 11$.
\end{proposition}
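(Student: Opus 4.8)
The plan is to argue by contradiction, mirroring the proof of Proposition~\ref{prop_10} but with the cut size raised from $9$ to $10$ and with the base case supplied by Proposition~\ref{prop:nodiscon2}. Suppose some $A$ with $2\le |A|\le n/2$ satisfies $|N'(A)\setminus A|\le 10$, and among all such counterexamples choose one with $|A|$ minimal. Write $C=N'(A)\setminus A$ (so $|C|\le 10$) and $B=S\setminus(A\cup C)$; since $|A|\le n/2$ and $k>491$, the set $B$ is nonempty and there is no edge of $G'_{10}(k)$ between $A$ and $B$. By Proposition~\ref{prop:nodiscon2} we have $|A|\ge 3$. The first reduction is the \emph{unique-neighbor} argument of Proposition~\ref{prop_10}: if some $q\in C$ had a single neighbor $p_0\in A$, then $A'=A\setminus\{p_0\}$ would be a smaller counterexample, since moving $p_0$ into the cut and $q$ into $B$ keeps $|C|\le 10$ and preserves the separation while $|A'|\ge 2$. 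Minimality therefore forces every vertex of $C$ to have at least two neighbors in $A$.

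Next I would transfer the grid-structure properties of Proposition~\ref{prop_10}, now with $|C|\le 10$. As there, $A$ cannot contain all vertices of any grid line, since together with $|A|\le n/2$ this would force $A$ to fill all but at most $10$ of the remaining parallel lines; and $A$ can meet at most $10$ distinct rows and at most $10$ distinct columns, because an incomplete line met by $A$ contributes a distinct vertex of $C$ (consecutive collinear grid points are adjacent), whereas a complete line has just been excluded. This second property already gives the strong bound $|A|\le 10\cdot 10=100$, so $A$ is a small set confined to a $10\times 10$ arrangement of lines. Finally, applying the pigeonhole principle to the at most $10$ occupied columns among the $k>491$ columns yields a column $x=i_0$ that meets $A$ but has four consecutive empty columns immediately on one side; by symmetry assume these are $x=i_0-1,\dots,x=i_0-4$.

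With the clean column in hand I would run the diagonal propagation of Proposition~\ref{prop_10}. Using the long grid edges of type $(i-2,j-1)$ and $(i-1,j+2)$, the absence of $A$ in the four preceding columns first confines the occupied cells of column $i_0$ to the two lowest rows, and then forces $(i_0+\ell,\,j_0+3\ell)\in A$ for successive $\ell$, where $j_0\in\{1,2\}$; otherwise one of these cells would be a unique neighbor of a vertex of $C$ lying in the empty columns, contradicting the no-unique-neighbor property. Since each step advances the row index by $3$, after a bounded number of steps (eleven suffice, and $k>491$ leaves ample room inside the grid) the set $A$ would meet more than $10$ distinct rows, contradicting the bound established above. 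This settles the interior case.

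The main obstacle, and the reason for the large constant $k>491$, is the boundary: the long edges invoked in the propagation are precisely the ones modified in passing to $G'_{10}(k)$, and they are also distorted where the fullerene layers are glued to the four sides, near the four grid corners, and at the twelve spread-out pentagon sites. I would handle this by using $k>491$ to guarantee that a counterexample set $A$ (which has $|A|\le 100$ and spans at most $10$ rows and columns) together with its clean column can be placed far from all $O(1)$ irregular regions, so that every grid edge used in the propagation is genuinely present in $G'_{10}(k)$; the few configurations forced to touch a fullerene side or a corner are then disposed of directly, as in Proposition~\ref{prop:nodiscon2}, by exhibiting the missing external neighbors supplied by the $5$-connected fullerene attachment. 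Verifying that $k>491$ indeed keeps the relevant $O(1)$ special regions away from every admissible $A$, and that the propagation never needs a deleted edge, is the delicate bookkeeping the constant is tuned to make routine.
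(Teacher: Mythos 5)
Your overall skeleton---minimal counterexample, the unique-neighbor reduction, line counting, pigeonhole for a clean column, diagonal propagation---is exactly the paper's. But the execution has a genuine gap at the line-counting step, and it stems from a misreading of how $G'_{10}(k)$ differs from $G_{10}(k)$. You claim that $A$ can meet at most $10$ rows and $10$ columns ``because consecutive collinear grid points are adjacent.'' In $G'_{10}(k)$ they are not: the graph was obtained from $G_{10}(k)$ by deleting $d=2\cdot 56=112$ edges, \emph{all of which are parallel to the coordinate axes}. A row that meets $A$ without being contained in $A$ therefore need not contribute a vertex of $C$: the transition from $A$ to the rest of that row can occur precisely across a deleted edge. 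The paper's proof handles this by weakening the bound to $10+d$ rows/columns (each incomplete row met by $A$ costs either a $C$-vertex or a deleted edge), and this is exactly where the constant comes from: $k>3(10+d+1)+10+d=491$. Your derived bounds $|A|\le 10\cdot 10=100$ and ``eleven propagation steps suffice'' both rest on the false adjacency claim; with the correct bound the propagation must produce points in more than $10+d=122$ distinct lines before reaching a contradiction, which is why $k$ must be so large. The fact that your argument nowhere actually uses $k>491$ (a far smaller $k$ would do under your claims) is the telltale sign that the constant's role was lost.

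Second, you have the geometry of the modification reversed. You assert that ``the long edges invoked in the propagation are precisely the ones modified in passing to $G'_{10}(k)$,'' and you then try to repair this by ``placing'' $A$ and its clean column far from the $O(1)$ irregular regions. In fact the knight-move edges of type $(i+2,j+1)$, $(i-2,j-1)$, $(i+1,j-2)$, $(i-1,j+2)$ used in the propagation are untouched in $G'_{10}(k)$---the long diagonal edges are the \emph{new} edges, added only when forming $G_{11}(k)$---so the propagation goes through verbatim; it is the axis-parallel adjacency underlying the line counting that is damaged. Moreover, the repair you propose is not a legitimate move in any case: $A$ is an arbitrary adversarial set in a proof by contradiction, so you cannot choose where it lies; any argument must work uniformly over all positions of $A$ relative to the deleted edges, which is precisely what the paper's $10+d$ bookkeeping accomplishes without any positioning assumption.
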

\begin{proof}
The proof of this claim is analogous to the proof of Proposition~\ref{prop_10}. The key observation is that only a constant number $d$ of edges have been deleted from $G_{10}(S)$, and all of them are parallel to the coordinate axes.

Suppose, to the contrary, that there is a vertex set $A\subset S$ of size $2\leq |A|\leq n/2$
such that $|N'(A)\setminus A|\leq 10$. Let $A$ be a minimal such set. By Proposition~\ref{prop:nodiscon2}, we have $|A|\geq 3$.
Similar to Proposition~\ref{prop_10}, when $k>3(10+d+1)+10+d=491$, we can prove the following statements.
\begin{enumerate}
  \item Set $A$ cannot contain all the vertices on a line $x = i$ (or $y=j$).
  \item Set $A$ cannot contain vertices in more than $10+d$ rows (columns).
  \item There exists an index $i_0$ such that the line $x=i_0$ contains a vertex of $A$, and either the four previous lines or the four following lines contain no vertices of $A$.
\end{enumerate}

We now use a reasoning analogous to the one given in Proposition~\ref{prop_10}: since $A$ is minimal, no vertex of $C$ can have a unique neighbor in $A$. Thus, there cannot be a vertex of $A$ in position $(i_0,j)$ for any $j\geq 3$. Now, if $j_0\in \{1,2\}$ is the maximum index such that $(i_0,j_0)\in A$, then $A$ must also contain vertex $(i_0+1,j_0+3)$. By repeating this argument with the newly obtained points, we conclude that the points of the form $(i_0+\ell,j_0+3\ell)$ must be in $A$, for $\ell=0,\ldots , k/3-1$. This contradicts the fact that no more than $10+d$ lines contain points of $A$.
\end{proof}

By combining Propositions~\ref{prop:nodiscon2} and~\ref{prop:nodiscon3}, the fact that the minimum vertex degree of $G_{11}(k)$ is $11$, and that $G'_{10}(k) \subset G_{11}(k)$, we obtain a graph of the maximum possible connectivity.

\begin{theorem}\label{theor_11}
There exist infinitely many $11$-connected biplane graphs, and no biplane graph is 12-connected.
\end{theorem}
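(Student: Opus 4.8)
The plan is to split the statement into its two halves---``no biplane graph is $12$-connected'' (an upper bound) and ``there exist infinitely many $11$-connected biplane graphs'' (a lower bound)---and treat each separately. The upper bound follows from a global edge count. A $12$-connected graph has at least $13>8$ vertices, so the bound of Hutchinson et al.~\cite{HSV99} applies and gives at most $6n-18$ edges; summing degrees yields $\sum_{v}\deg(v)\le 2(6n-18)=12n-36<12n$, so some vertex has degree at most $11$. Since $\kappa(G)\le\delta(G)$ for every graph, no biplane graph can have connectivity $12$.

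For the lower bound the plan is to prove that $\kappa(G_{11}(k))=11$ for every $k>491$; letting $k\to\infty$ then produces infinitely many examples. The inequality $\kappa(G_{11}(k))\le 11$ is immediate, either from the upper bound above or simply because $G_{11}(k)$ contains vertices of degree $11$ by construction. The substance is the reverse inequality $\kappa(G_{11}(k))\ge 11$, which I would establish by ruling out any vertex cut $C$ of size at most $10$. Suppose such a $C$ exists, splitting $S$ into nonempty sets $A$ and $B$ with no edge between them; by symmetry assume $|A|\le n/2$.

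The key observation I would use is a monotonicity of neighborhoods: since $G'_{10}(k)\subset G_{11}(k)$, the $G'_{10}(k)$-neighborhood $N'(A)$ is contained in the $G_{11}(k)$-neighborhood of $A$. Every vertex of $N'(A)\setminus A$ is adjacent to $A$ and cannot lie in $B$, hence lies in $C$; therefore $|C|\ge|N'(A)\setminus A|$. It then suffices to bound $|N'(A)\setminus A|$ from below by $11$ in all cases. When $|A|=1$ this is exactly the fact, established during the construction, that the minimum vertex degree of $G_{11}(k)$ is $11$; when $|A|=2$ it is Proposition~\ref{prop:nodiscon2}; and when $3\le|A|\le n/2$ it is Proposition~\ref{prop:nodiscon3} (whose hypothesis $k>491$ is precisely where the constraint on $k$ enters). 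In each case $|C|\ge 11$, contradicting $|C|\le 10$, so $\kappa(G_{11}(k))\ge 11$, and combined with the upper bound we obtain $\kappa(G_{11}(k))=11$.

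I do not expect a genuine obstacle here, because the heavy combinatorial work has already been carried out in Propositions~\ref{prop:nodiscon2} and~\ref{prop:nodiscon3} and in the degree analysis of the construction; this theorem is essentially an assembly. The one point requiring care is the monotonicity step linking a cut in $G_{11}(k)$ to a neighborhood computed in the sparser graph $G'_{10}(k)$, together with the symmetry reduction to $|A|\le n/2$ that lets the three cases above exhaust all possibilities.
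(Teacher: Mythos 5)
Your proposal is correct and follows essentially the same route as the paper: the upper bound via the $6n-18$ edge bound of Hutchinson et al.\ forcing a vertex of degree at most $11$, and the lower bound by assembling the minimum-degree-$11$ fact (for single vertices) with Propositions~\ref{prop:nodiscon2} and~\ref{prop:nodiscon3} (for larger sets), transferred to $G_{11}(k)$ through the containment $G'_{10}(k)\subset G_{11}(k)$. The paper compresses this assembly into one sentence; your write-up merely makes explicit the monotonicity step $|C|\ge|N'(A)\setminus A|$ and the reduction to $|A|\le n/2$, both of which are sound.
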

\section*{Acknowledgements}
A. G.,  F. H., M. K., R.I. S. and J. T. were partially supported by ESF EUROCORES programme EuroGIGA, CRP ComPoSe: grant EUI-EURC-2011-4306, and  by project MINECO MTM2012-30951/FEDER. F. H., and R.I. S. were also supported by project Gen. Cat. DGR 2009SGR1040. A. G. and J. T. were also supported by project E58-DGA. M.~K. was supported by the Secretary for Universities and Research of the Ministry of Economy and Knowledge of the Government of Catalonia and the European Union. I.~M. was supported by FEDER funds through COMPETE--Operational Programme Factors of Competitiveness, CIDMA and FCT within project PEst-C/MAT/UI4106/2011 with COMPETE number FCOMP-01-0124-FEDER-022690. M.~S.\ was supported by the
project NEXLIZ - CZ.1.07/2.3.00/30.0038, which is co-financed by
the European Social Fund and the state budget of the Czech
Republic, and by ESF EuroGIGA project ComPoSe as F.R.S.-FNRS - EUROGIGA NR 13604. R.~S. was funded by Portuguese funds through CIDMA (Center for Research and Development in Mathematics and Applications) and FCT (Funda\c{c}{\~a}o para a Ci{\^e}ncia e a Tecnologia), within project PEst-OE/MAT/UI4106/2014, and by FCT grant SFRH/BPD/88455/2012.


\begin{thebibliography}{00}

\bibitem{AGHTU08}
M. Abellanas, A. Garc\'{\i}a, F. Hurtado, J. Tejel, and J. Urrutia,
Augmenting the connectivity of geometric graphs,
\emph{Computational Geometry: Theory and Applications} {\bf 40} (3) (2008), 220--230.

\bibitem{ACN+82} M. Ajtai, V. Chv\'atal, M.M. Newborn, and E. Szemer\'edi,
Crossing-free subgraphs,
\emph{Annals of Discrete Mathematics} \textbf{12} (1982), 9--12.

\bibitem{ISTW10}
M.~Al-Jubeh, G.~Barequet, M.~Ishaque, D.~L.~Souvaine, C.~D.~T\'oth, and A.~Winslow,
Constrained tri-connected planar straight line graphs,
in \emph{Thirty Essays on Geometric Graph Theory (J. Pach, ed.)}, Springer, 2013, pp. 49--70.

\bibitem{AIR09}
M.~Al-Jubeh, M.~Ishaque, K.~R\'edei, D.~L.~Souvaine, C.~D.~T\'oth, and P.~Valtr,
Augmenting the edge connectivity of planar straight line graphs to three,
\emph{Algorithmica} \textbf{61} (4) (2011), 971--999.

\bibitem{Bar74} D. Barnette,
On generating planar graphs,
\emph{Discrete Mathematics} \textbf{7} (1974), 199--208.

\bibitem{Bei97} L.~W. Beineke,
Biplanar Graphs: A Survey,
\emph{Computers \& Mathematics with Applications} \textbf{34} (11) (1997), 1--8.

\bibitem{BGK12} G. Brinkmann, J. Goedgebeur, B. D. McKay,
The generation of fullerenes,
\emph{Journal of Chemical Information and Modeling} \textbf{52} (11) (2012), 2910--2918.

\bibitem{ChartrandLesniak}
G. Chartrand and L. Lesniak,
\emph{Graphs and digraphs},
Chapman and Hall/CRC, 2005.

\bibitem{Dillencourt} M. B. Dillencourt, D. Eppstein and D. S. Hirschberg,
Geometric Thickness of Complete Graphs,
\emph{J. Graph Algorithms \&  Applications} \textbf{4(3)} (2000), 5--17.

\bibitem{Doslic} T. Doslic,
Cyclical edge-connectivity of fullerene graphs and $(k, 6)$-cages,
\emph{Journal of Mathematical Chemistry} \textbf{33} (2003), 103--112.

\bibitem{EHLP12}
P.~Eades, S.-H.~Hong, G.~Liotta, and S.-H.~Poon,
F\'{a}ry's theorem for 1-planar graphs,
in \emph{Proc. 18th COCOON}, LNCS~7434, Springer, 2012, pp. 335--346.

\bibitem{Epp09} D. Eppstein,
Testing bipartiteness of geometric intersection graphs,
\emph{ACM Transactions on Algorithms} \textbf{5} (2) (2009), article 15.

\bibitem{F48} I. F\'ary,
On straight-line representation of planar graphs,
\emph{Acta Scientiarum Mathematicarum} (Szeged) \textbf{11} (1948), 229--233.

\bibitem{GHKMSSTT13-II}
A.~Garc\'{i}a, F.~Hurtado, M.~Korman, I.~Matos, M.~Saumell, R.~Silveira,
  J.~Tejel, and C.~D. T\'{o}th.
\newblock Geometric biplane graphs {II}: Graph augmentation.
\newblock {\em Graphs {\&} Combinatorics}, 31(2):427--452, 2015.
\newblock Special issue of selected papers from the Mexican Conference on
  Discrete Mathematics and Computational Geometry (2013). Also available on {\em arXiv}
  
\bibitem{GN09} O. Gim\'enez and M. Noy,
Asymptotic enumeration and limit laws of planar graphs,
\emph{Journal of the AMS} \textbf{22} (2) (2009), 309--329.  

\bibitem{hsstw11} M.~Hoffmann, A.~Schulz, M.~Sharir, A.~Sheffer, C.~D.~T\'oth, and E.~Welzl,
Counting plane graphs: flippability and its applications,
in \emph{Thirty Essays on Geometric Graph Theory (J. Pach, ed.)}, Springer, 2013, pp. 303--326.

\bibitem{HT13} F.~Hurtado and C.~D.~T\'oth,
Plane geometric graph augmentation: a generic perspective,
in \emph{Thirty Essays on Geometric Graph Theory (J. Pach, ed.)}, Springer, 2013, pp. 327--354.

\bibitem{HSV99} J. P. Hutchinson, T. C. Shermer and A. Vince,
On representations of some thickness-two graphs,
\emph{Computational Geometry: Theory and Applications} {\bf 13} (1999), 161--171.

\bibitem{KM08} V. P. Korzhik and B. Mohar,
Minimal obstructions for 1-immersions and hardness of 1-planarity testing,
in \emph{Proc. 16th Graph Drawing}, vol.~5417 of LNCS, Springer, 2009, pp. 302--312.

\bibitem{LL86} D. T. Lee and A. K. Lin,
Generalized Delaunay triangulation for planar graphs,
\emph{Discrete and Computational Geometry} \textbf{1} (1986), 201--217.

\bibitem{RW12} I. Rutter and A. Wolff,
Augmenting the connectivity of planar and geometric graphs,
\emph{Journal of Graph Algorithms and Applications} \textbf{16} (2) (2012), 599--628.

\bibitem{SS11} M. Sharir and A. Sheffer,
Counting triangulations of planar point sets,
\emph{Electronic Journal of Combinatorics} \textbf{18} (1) (2011).

\bibitem{Csa12}
C.~D. T\'{o}th,
Connectivity augmentation in planar straight line graphs,
{\em European Journal of Combinatorics}, \textbf{33} (3) (2012), 408--425.

\end{thebibliography}
\end{document}